\newtheorem{theorem}{Theorem}[section]
\newtheorem{definition}[theorem]{Definition}
\newtheorem{lemma}[theorem]{Lemma}
\newcommand{\Xomit}[1]{}
\newcommand{\vone}{\vspace{.1in}}
\newcommand{\vh}{\vspace{.05in}}
\newcommand{\vmone}{\vspace{-.1in}}
\newcommand{\MarkEdges}{Extensions}
\newcommand{\MinWeightDelta}{Min-Wt-Cyc}
\begin{document}
\title{Finding $k$ Simple Shortest Paths and Cycles}
\author{Udit Agarwal $^*$ and Vijaya Ramachandran~\thanks{Dept. of Computer Science, University of Texas, Austin TX 78712. Email: {\tt udit@cs.utexas.edu, vlr@cs.utexas.edu}. This work was supported in part by NSF Grant CCF-1320675. The first author's research was also supported in part by a Calhoun Fellowship.}}
\maketitle

\vspace{-.5in}
\begin{abstract}
We present algorithms and hardness results for several problems related to finding multiple simple shortest paths in a graph.
Our main result is a new algorithm for finding $k$ simple shortest paths for all pairs of vertices
in a weighted directed graph $G=(V,E)$. For $k=2$ our algorithm runs in $O(mn + n^2 \log n)$ time where $m$ and $n$ are the number of edges
and vertices in $G$.  
Our approach is based on forming suitable path extensions to find simple shortest paths; this method is different from the `detour finding' technique used in most of the prior work on simple shortest paths, replacement paths, and distance sensitivity oracles.
We complement this result by showing that finding 2 simple shortest paths even for a single pair of vertices is at least as hard as
finding
 a minimum weight cycle in $G$, for which no sub-$mn$ time algorithm is known.

 We present new algorithms for generating  simple cycles and simple paths in $G$  in non-decreasing order of their weight. The algorithm for
generating simple paths is much faster,  and uses another variant of path extensions.  We also give  hardness results for sparse graphs, relative to the complexity of computing a minimum weight cycle in a graph, for several variants of problems related to finding $k$ simple paths and cycles, and we give related results for undirected graphs.
\end{abstract}

\section{Introduction}\label{sec:intro}

Computing shortest paths in a weighted directed graph 
 is a very well-studied problem. Let $G=(V,E)$ be a directed graph with
non-negative edge weights, with $|V|=n$, $|E|=m$.
Then, a shortest path for a single pair of vertices can be computed in $\tilde{O}(m)$
 time, and
the all pairs shortest paths (APSP) in $\tilde{O}(mn)$ time~\cite{CLRS09}, where $\tilde{O}$ hides $polylog(n)$ factors.

A related problem is one
of computing a sequence of $k$ shortest paths, for $k> 1$.
If the paths need not be simple, the problem of generating
$k$ shortest paths is also well understood, and the most efficient algorithm is due to
Eppstein~\cite{Eppstein98}, which has the following bounds --- $O(m+n\log n+k)$ for a single pair of vertices
and $O(m + n\log n + kn)$ for single source.

It is noted in~\cite{Eppstein98}
that the simple paths version of the $k$ shortest paths problem is more common than the unconstrained version considered in~\cite{Eppstein98}.
In the $k$ simple shortest paths problem, given a pair of vertices $s,t$,  the output is a sequence of $k$
simple paths from $s$ to $t$, where the $i$-th path in the collection is a shortest simple path in the graph 
that is not identical to any of the $i-1$ paths preceding it in the output.
(Note that these $k$ simple shortest paths need not have the same weight).

The all-pairs version of this problem (where the paths need not be simple) was considered in the classical papers of 
 Lawler~\cite{Lawler72,Lawler77} and Minieka~\cite{Minieka74},
 and the most efficient current algorithm for $k$-APSP runs the SSSP algorithm in~\cite{Eppstein98} on each of the
$n$ vertices in turn. In this  paper, a central problem we consider the all-pairs version of this problem when the paths are required to 
be simple ($k$-APSiSP). It was noted in Minieka~\cite{Minieka74} that the all-pairs version of $k$ shortest paths
 becomes significantly harder when simple paths are required.

 Even for a single source-sink pair, the problem of generating $k$ simple shortest paths   ($k$-SiSP) is considerably more challenging than the unrestricted version considered in~\cite{Eppstein98}.
 Yen's algorithm~\cite{Yen71} finds the
 $k$ simple shortest paths for a specific pair of vertices  in
 $O(k \cdot (mn + n^2 \log n))$. 
  Gotthilf and Lewenstein \cite{GL09} improved the time bound slightly to $O(k(mn + n^2\log \log n))$.
  In terms of hardness of this problem,
   it is shown by- V. Williams and R. Williams ~\cite{WW10}
 that if the {\it second} simple shortest path for a single source-sink pair (i.e., $k=2$ in $k$-SiSP)
 can be found in $O(n^{3 - \delta})$ time for some $\delta>0$,
 then APSP can also be computed in $O(n^{3-\alpha})$ time for some $\alpha >0$; the latter is a major open problem.
(In this formulation, the dependence on $m$, the number of edges, is not considered.)

The $k$-SiSP problem is much simpler in the undirected case and is known to be solvable in $O(k(m + n\log n))$ time \cite{KIM82}. For unweighted directed graphs, Roditty and Zwick \cite{RZ12} gave an $\tilde{O}(km\sqrt{n})$ randomized algorithm for $k$-SiSP. They also showed that $k$-SiSP can be solved with $O(k)$ executions of an algorithm for the $2$-SiSP problem.
Approximation algorithms for $k$-SiSP are considered in~\cite{Roditty10,Bernstein10}. Other
related results can be found in~\cite{Feng14,FR15,HC99,HMS07,HS01,HSB07,LP97,MK05,MP03,Perko86,Sedeno12}.

A problem related to 2-SiSP is the {\it replacement paths} problem. In the $s$-$t$ version of this problem, we need to output
a shortest path from $s$ to $t$ when an edge on the shortest path $p$  is removed; the output is a collection of  $|p|$ paths, each a
shortest path from $s$ to $t$ when  an edge on $p$ is removed. Clearly, given the solution to the $s$-$t$ replacement paths
problem, the second shortest path from $s$ to $t$ can be computed as the path of minimum weight in this solution. This
is essentially the method used in all algorithms for 2-SiSP (and with modifications, for $k$-SiSP), and thus the current fastest algorithms for  2-SiSP and replacement paths have the same time bound.
 For the all-pairs case that is of interest to us, the output for the replacement paths
problem would be $O(n^3)$ paths, where each path is shortest for a specific vertex pair, when a specific edge in its
shortest path is removed. In view of the large space needed for this output, in the all-pairs version of  replacement paths,
the problem of interest is {\it distance sensitivity oracles (DSO).} Here, the output is a compact representation from which any specific
replacement path can be found with $O(1)$ 
time.
The first such  oracle  was 
developed in Demetrescu et. al. \cite{DTCR08}, and it has size $O(n^2 \log n)$. 
The current best construction time for an oracle of this size is 
$O(mn \log n + n^2 \log^2 n)$ time for a randomized algorithm, 
and a log factor slower for a deterministic
algorithm,
by Bernstein and Karger~\cite{BK09}.
Given such an oracle, the output to
$2$-APSiSP can be computed with $O(n)$ queries for each
source-sink pair, i.e.,
with  $O(n^3)$ queries to the DSO.

\vone
The most well-studied problem relating to $k$ simple  cycles is that of finding cycles in the overall graph.
The problem of enumerating simple (or {\it elementary}) cycles --- in no particular order ---has been
 studied extensively~\cite{Tie70, Wei72,Tar73,Joh75}, and the first algorithm that generated successive  simple cycles in polynomial time was given by Tarjan~\cite{Tar73}; this algorithm generates each successive cycle in $O(mn)$ time. 
 This result was improved to linear time by Johnson~\cite{Joh75}. We are not aware of prior bounds on enumerating simple cycles in nondecreasing order of weight. 
 
 Another well-studied problem relating to simple cycles is the problem of finding
a minimum weight cycle  (Min-Wt-Cyc) in a graph.
 Finding a minimum weight cycle  in either a directed or an undirected graph
is known to be equivalent to APSP for sub-cubic algorithms ~\cite{WW10}. The time bound for 
Min-Wt-Cyc in sparse graphs is $\tilde{O}(m \cdot n)$, 
and finding a faster algorithm for it is a long-standing open question.

In this paper, we concentrate on results for truly sparse graphs with arbitrary non-negative edge weights. Hence we do not consider results for small integers weights or for dense graphs; several
subcubic results for such graphs are known using fast matrix multiplication.

\subsection{Our Contributions}

We present several algorithmic results, and complement many of them with hardness results relative to computing Min-Wt-Cyc  on sparse graphs.
\footnote{Except for $k$-All-SiSP (see Section~\ref{sec:allsisp2}), we can also handle negative edge-weights as long as there are no
negative-weight cycles, by applying Johnson's transformation~\cite{Joh77} to obtain an equivalent input with nonnegative edge weights.
If the resulting edge-weights include weight 0, we will
use the pair $(wt(p), len(p))$ as the weight for  path $p$, where $len(p)$ is the number of edges
in it; this causes the weight of a proper subpath of $p$ to be smaller  than the weight of $p$.}

\vone
1. {\it Computing $k$ simple shortest paths for all pairs ($k$-APSiSP) in $G$.}
We present a new approach to the $k$-APSiSP problem.
In order to construct
the desired set $P^*_k (x,y)$ of $k$ simple shortest paths from $x$ to $y$,
our method uses the notion of a `nearly $k$ SiSP set' $Q_k(x,y)$,
defined as follows.

\begin{definition}\label{def:qk}
Let $G=(V,E)$ be a directed graph with nonnegative edge weights. For $k \geq 2$,
and a vertex pair $x,y$, let $k^* = \min\{r,k\}$, where $r$ is the number of simple paths from $x$ to $y$ in $G$.
Then,\\
(i)  $P^*_k (x,y)$ is the set of $k^*$ \emph{simple shortest paths from $x$ to $y$} in $G$.\\
(ii) $Q_k(x,y)$ is the set of \emph{$k$ nearly simple shortest paths from $x$ to $y$}, 	defined as follows.
If $k^*=k$  and the $k-1$ simple shortest paths from $x$ to $y$ share the same first edge $(x,a)$ then $Q_k(x,y)$
contains these $k-1$ simple shortest paths, together with the 
simple shortest path from $x$ to $y$ that does not start with edge
$(x,a)$, if such a path exists. Otherwise
 (i.e, if either the former or latter condition does not hold),
, $Q_k(x,y) = P^*_k (x,y)$.
\end{definition}

Our algorithm for $k$-APSiSP  first constructs  $Q_k(x,y)$
for all pairs of vertices $x,y$, and then uses these sets in an efficient
algorithm, {\sc Compute-APSiSP}, to compute the $P^*_k (x,y)$  for all $x,y$. The latter algorithm runs in time
$O(k \cdot n^2 + n^2 \log n)$ for any $k$, while our method for constructing the $Q_k(x,y)$  depends on $k$. 
For $k=2$ we present an $O(mn + n^2 \log n)$ time method to compute the $Q_2(x,y)$ sets;
this gives a 2-APSiSP algorithm
that matches Yen's bound of $O(mn + n^2 \log n)$ for
$2$-SiSP for a single pair. 
It is also faster (by a polylogarithmic factor)
than the best algorithm for DSO (distance sensitivity
oracles) for the all-pairs
replacement paths problem~\cite{BK09}.
In fact, we also show that the $Q_2(x,y)$ sets can be computed in $O(n^2)$ time using a DSO, and hence 2-APSiSP can be computed in
$O(n^2 \log n)$ time plus the time to construct the DSO. 

 For $k \geq 3$ our algorithm to compute the $Q_k$ sets makes calls to an algorithm for 
$(k-1)$-APSiSP, so we combine the two components  together in a single  recursive method, {\sc APSiSP}, that takes as
input $G$ and $k$, and outputs the $P^*_k$ sets for all vertex pairs. The time bound for {\sc APSiSP} 
increases with $k$: it is faster than Yen's method for $k=3$
by a factor of $n$ (and hence is faster than the current fastest method by almost a factor of $n$), 
it matches Yen for $k=4$, and its
performance degrades for larger $k$.

Our method for computing $k$-APSiSP (using the $Q_k(x,y)$ sets) extends an existing simple path in the
data structure to a new simple path in the data structure by adding a single incoming edge.
 These extensions have to be
performed carefully in order to ensure that the extended path is simple, and the collection of paths
formed
includes the $k$-SiSPs for every pair of vertices. This approach differs from all previous approaches to $k$ simple paths and replacement paths. All known previous algorithms for 2-SiSP compute replacement paths for every edge on the shortest path (by computing suitable `detours').
In fact, Hershberger et al.~\cite{HSB07} present a lower bound for $k$-SiSP, exclusively for the class of algorithms that use detours, by pointing out that
all known algorithms for $k$-SiSP compute replacement paths, and all known replacement path algorithms use detours. 
Although this lower bound is for $k$-SiSP and not $k$-APSiSP, the model used for this lower bound does not apply to our algorithm, since 
 {\sc Compute-APSiSP} only computes a subset of the replacement paths (across all vertex pairs), and further, it can generate and inspect paths that are not detours, including paths with cycles.  Thus our method is fundamentally new.
Our algorithms for $k$-APSiSP are presented in
Section~\ref{sec:apsisp}.

\begin{table*}
\small
\centering
\begin{tabular}{| c | l | l |}
\hline
{\sc Problem} & \hfill {\sc Known Results}  \hfill $~$ & \hfill  \textbf{\sc New} {\sc Results} \hfill $~$ \\
\hline
\hline
2-APSiSP & \underline{Upper Bound}: $O(n^3)$ (using DSO) & \textbf{\underline{Upper Bound}:} $\bm{\tilde{O}(mn)}$  \vspace{0.5mm} \\
 (Sec. \ref{sec:2APSiSP})  & & \\
\hline
3-APSiSP & \underline{Upper Bound}: $\tilde{O}(mn^3)$ ~\cite{Yen71} & \textbf{\underline{Upper Bound}:} $\bm{\tilde{O}(mn^2)}$ \vspace{0.5mm} \\
 (Sec. \ref{sec:k-apsisp})  & & \\
\hline
2-SiSP & \underline{Hardness}: Min-Wt-$\Delta$ $\leq$ 2-SiSP &  \\
 (Sec.~\ref{sec:intro}, \ref{sec:hardness}) & (for subcubic) ~\cite{WW10} & \textbf{\underline{Hardness}:} \textbf{\MinWeightDelta}  $\bm{\leq_{(m+n)}}$ \textbf{2-SiSP} \\
& \underline{Upper Bound}: $\tilde{O}(mn)$ ~\cite{GL09} & \vspace{0.5mm} \\
\hline
$k$-SiSP  & \underline{Hardness}: Min-Wt-$\Delta$ $\leq$ $k$-SiSP & \textbf{\underline{Hardness}:}  \textbf{Min-Wt-Cyc} $\bm{\leq_{(m+n)}}$ $\bm{k}$\textbf{-SiSP} \vspace{.5mm} \\
(for $k\geq2$) & (for subcubic) ~\cite{WW10} & \textbf{(improvement due to above result)} \\
 (Sec. 1, \ref{sec:hardness})
 & \underline{Upper Bound}: $\tilde{O}(kmn)$ ~\cite{GL09} &  \vspace{0.5mm} \\
\hline
$k$-SiSC & \hfill --- \hfill $~$& $\bm{k}$\textbf{-SiSP} $\bm{\equiv_{(m+n)}}$ $\bm{k}$\textbf{-SiSC} \\
 (Sec. \ref{sec:k-cycles}, \ref{sec:hardness})  & & \\
\hline
\vspace{0.5mm} 
$k$-AVSiSC & \hfill --- \hfill $~$ & \textbf{\underline{Hardness}:} \textbf{\MinWeightDelta} $\bm{\leq_{(m+n)}}$ $\bm{2}$\textbf{-AVSiSC} \\
 (Sec. \ref{sec:k-cycles}, \ref{sec:hardness}) & & \textbf{\underline{Upper Bound}:} $\bm{\tilde{O}(mn)}$ \textbf{for} $\bm{(k = 2)}$ \\
& & \hspace{15mm} \textbf{and} $\bm{\tilde{O}(kmn^2)}$ \textbf{for} $\bm{(k > 2)}$ \\ 
\hline
$k$-All-SiSC & \hfill --- \hfill $~$ & \textbf{\underline{Hardness}:} \textbf{\MinWeightDelta} $\bm{\leq_{(m+n)}}$ $\bm{2}$\textbf{-All-SiSC} \\
 (Sec. \ref{sec:allsisp1}, \ref{sec:hardness}) & & \textbf{\underline{Upper Bound}:} $\bm{\tilde{O}(mn)}$ {\bf per cycle} \vspace{0.5mm}  \\
\hline
\vspace{0.5mm} 
$k$-All-SiSP & \hfill --- \hfill $~$  & \textbf{\underline{Upper Bound}:} \textbf{amortized} $\bm{\tilde{O}(k)}$ \textbf{if} $\bm{k < n}$ \\
 (Sec. \ref{sec:allsisp2}) & & \hspace{20mm} \textbf{and} $\bm{\tilde{O}(n)}$ \textbf{if} $\bm{k \geq n}$ \textbf{per path} \\
& & \hspace{20mm} \textbf{after a startup cost of}  $\bm{O(m)}$ \\
\hline

\end{tabular}
\caption{Our Results for directed graphs.$~$(DSO stands for Distance Sensitivity Oracles; $\leq_{(m+n)}$ reductions are defined in Section~\ref{sec:hardness}, as are results for undirected graphs.)} \label{table1}
\end{table*}

\vone
2. {\it Generating the $k$ simple shortest cycles ($k$-All-SiSC) or $k$ simple shortest paths ($k$-All-SiSP) in $G$.}
In Section~\ref{sec:allsisp} we consider the problem of enumerating  simple cycles and paths
in the graph, in nondecreasing order of their weights. 
We present an algorithm for $k$-All-SiSC that, after an
initial preprocessing cost of $O(mn + n^2 \log n)$, generates each successive simple shortest cycle in $G$ in $O(APSP) = O(mn + n^2 \log\log n)$ time. We also observe (in Section~\ref{sec:hardness}) that it is unlikely that we can
obtain the linear time achieved for generating successive simple cycles in no particular order,  since we show that generating each successive simple shortest cycle is at least as hard as
Min-Wt-Cyc.

Complementing the result for $k$-All-SiSC, we present an algorithm for $k$-All-SiSP that generates each successive simple path in $\tilde{O}(k)$ time if $k<n$, and in $\tilde{O}(n)$ time if $k>n$, after
an initial start-up cost of $O(m)$ to find the first path.  
Here, $\tilde{O}$ omits $polylog(n)$ factors.
This time bound is considerably faster than that for $k$-All-SiSC and for Min-Wt-Cyc.
Our method, {\sc All-SiSP}, is again one of extending existing
paths by an edge;
it is, however, different from 
{\sc Compute-APSiSP}. 

\vh
\noindent
{\bf Path Extensions.} We use two different path extension methods, one for $k$-APSiSP and the other for $k$-All-SiSP.
Path extensions have been used before in the hidden paths algorithm for APSP~\cite{KKP93} and more recently, for
fully dynamic APSP~\cite{DI04}. These two path extension methods differ from each other, as noted in \cite{DI06}. 
Our path extension method for $k$-All-SiSP is inspired by the method in~\cite{DI04} to compute `locally shortest paths'.
However, our path extension method for $k$-APSiSP is not related to the earlier results, except for the fact that all 
path extension methods place suitable paths on a priority queue and extract paths of minimum weight.

\vone
3. {\it Computing $k$ simple shortest cycles through a single vertex ($k$-SiSC) and $k$ simple shortest cycles
through  every vertex ($k$-AVSiSC).}   We show reductions between $k$-SiSC and 
$k$-SiSP, which give both algorithms and hardness results for $k$-SiSC. For $k$-AVSiSC, we 
give an $O(mn + n^2 \log n)$ time algorithm for $k=2$ using our 2-APSiSP algorithm, and
an algorithm that performs $n$  $k$-SiSP computations for $k>2$.

\vone
\noindent
{\bf Our Major Theorems.}
Here are the main theorems we establish for our algorithmic results. Some conditional hardness results through reductions are presented in Section~\ref{sec:hardness} for path and cycle problems on sparse graphs (as shown in 
Table~\ref{table1}).
  In all cases, the input is a directed graph $G=(V,E)$ with nonnegative
edge weights, and $|V|=n$, $|E|=m$.

\begin{theorem}\label{thm1}
Given an integer $k>1$, and the nearly simple shortest paths sets $Q_k(x,y)$ (Definition~\ref{def:qk}) for all $x,y\in V$,  Algorithm
{\sc Compute-APSiSP} (Section~\ref{sec:compute-pstar}) produces the $k$ simple shortest paths for every pair of vertices in 
$O(k \cdot n^2 + n^2 \cdot\log n)$ time.
\end{theorem}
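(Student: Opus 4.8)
The plan is to treat {\sc Compute-APSiSP} as a priority-queue–driven path-extension procedure and to isolate a single structural lemma that explains why the sets $Q_k$ suffice as ``seeds.'' First I would fix the target vertex $y$ and describe the process that builds $P^*_k(\cdot,y)$ for all sources at once: initialize a priority queue $H$, keyed by the pair $(wt(p),len(p))$ so that ties break toward shorter paths (cf.\ the footnote on zero weights), with every path in $\bigcup_w Q_k(w,y)$, each represented compactly as a first edge together with a pointer to its tail. Then repeatedly extract the minimum-key path $P$ from some $w$ to $y$; if fewer than $k$ paths have so far been confirmed for the pair $(w,y)$ and $P$ is simple and not already confirmed, confirm $P$ as the next element of $P^*_k(w,y)$ and, for each incoming edge $(v,w)$ with $v\notin V(P)$, insert the extension $(v,w)\cdot P$ into $H$ as a candidate for the pair $(v,y)$.

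The crux is the following lemma, which I would prove by an exchange argument: if $P\in P^*_k(w,y)$ has first edge $(w,x)$ and tail $P'=P\setminus(w,x)$, then either $P\in Q_k(w,y)$ or $P'\in P^*_k(x,y)$. Suppose $P\notin Q_k(w,y)$. By Definition~\ref{def:qk} this forces the ``same first edge'' case: $k^*=k$, the top $k-1$ simple shortest $w$--$y$ paths share a first edge, and $P$ is neither one of them nor the recorded shortest path using a different first edge. A short argument (otherwise $P=p_k$ would coincide with that recorded different-first-edge path) shows that all of $p_1,\dots,p_k$, including $P=p_k$, begin with the common edge $(w,x)$, so their tails $t_1,\dots,t_k$ are exactly the $k$ lightest simple $x$--$y$ paths avoiding $w$. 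I would then upgrade this to ``lightest in all of $G$'': any simple $x$--$y$ path $r$ strictly lighter than $t_k$ and distinct from $t_1,\dots,t_{k-1}$ would, if it avoids $w$, yield $(w,x)\cdot r$ contradicting the choice of $p_k$, and if it passes through $w$, its $w$--$y$ suffix would be a simple $w$--$y$ path lighter than $p_k$ not beginning with $(w,x)$, again a contradiction. Hence $P'=t_k\in P^*_k(x,y)$. The genuinely delicate point is that the lone ``different first edge'' path recorded in $Q_k$ is exactly what is needed in the complementary situation where $p_k$ does \emph{not} share the common first edge: there $p_k$ equals that recorded path and must be supplied as a seed, since its own tail need not lie in $P^*_k$.

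Given the lemma, correctness follows by induction on the key $(wt(P),len(P))$. Every $P\in P^*_k(w,y)$ is either a seed already in $H$, or has a strictly smaller tail $P'\in P^*_k(x,y)$ that, by the inductive hypothesis, is confirmed before $P$ could be extracted; confirming $P'$ enqueues $(w,x)\cdot P'=P$. Since every path placed in $H$ is a genuine simple $w$--$y$ path and all of $P^*_k(w,y)$ is enqueued, the first $k$ simple paths extracted for each pair are precisely its $k$ simple shortest paths. The complementary direction—that a confirmed path is never spurious—is immediate, because any simple $w$--$y$ path lighter than the true $k$-th shortest already belongs to $P^*_k(w,y)$.

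For the running time, the number of confirmations is at most $k$ per pair, i.e.\ $O(kn^2)$ overall, and the main obstacle is to bound the extension and bookkeeping work by the same quantity rather than the naive $O(kmn)$ that arises from expanding each confirmed path along every incoming edge, and rather than the $\Theta(n)$ per path that a direct simplicity test would incur. I would process each target $y$ by a Dijkstra-like sweep over the $n$ sources whose priority queue holds only one active candidate per source, so that the $O(n)$ extract-mins per target contribute the $O(n^2\log n)$ term, while the remaining $k-1$ paths per pair are produced by lazy, Lawler-style successor generation that yields one new candidate per confirmation at amortized $O(1)$ cost. The most demanding part of the write-up will be exactly this accounting: confining the logarithmic overhead to $O(n^2)$ queue operations, and performing each simplicity test $v\notin V(P)$ within the extension step in amortized $O(1)$ time using the compact (first-edge, tail-pointer) representation, since any per-path membership scan would destroy the $O(kn^2+n^2\log n)$ bound.
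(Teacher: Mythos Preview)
Your structural lemma is essentially the paper's Lemma~\ref{lemma:rightSubpath} recast for a single path, and your correctness argument tracks the paper's Lemma~\ref{lemma:computeAPSiSP} closely. The genuine gap is in the running time.

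The algorithm you describe extends every confirmed path along \emph{all} incoming edges to its source, with an explicit test $v\notin V(P)$. Over the $O(kn)$ confirmations for a fixed target $y$ this is already $O(km)$ extensions, hence $O(kmn)$ overall, and your ``Dijkstra-like sweep with one active candidate per source'' together with ``Lawler-style successor generation at amortized $O(1)$'' does not repair this: you still need $k$ confirmations per source (not one), and Lawler-style deviation produces many candidates per confirmation, not one. There is no mechanism in your proposal that brings the number of heap insertions down to $O(n^2)$.

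What {\sc Compute-APSiSP} actually does is much more restricted, and this restriction is the whole point of the time bound. First, the top $k-1$ paths in every $P^*_k(x,y)$ are already correct after copying $Q_k(x,y)$; only the $k$-th path may need to change, and by Lemma~\ref{lemma:update} it changes at most once. Second, extensions are not taken along arbitrary incoming edges but only along the single edge recorded in $\MarkEdges(a,y)$, namely the common first edge $(x,a)$ of the $k-1$ shortest paths in $Q_k(x,y)$; there is at most one such edge per pair $(x,y)$ (Lemma~\ref{lemma:markEdges}), so the total number of extensions, and hence heap insertions, is $O(n^2)$ (Lemma~\ref{lemma:LESiP}). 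Third, no simplicity test is performed at all: the proof of Lemma~\ref{lemma:computeAPSiSP} shows that a non-simple extension can never be accepted into $P^*_k$, because its $x$--$y$ suffix with a different first edge would contradict the definition of $Q_k(x,y)$. Your plan to do the $v\notin V(P)$ check in amortized $O(1)$ via the pointer representation is not workable as stated; the paper sidesteps the issue entirely.

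In short: keep your correctness lemma, but replace your extension rule and runtime argument with the $\MarkEdges$ mechanism and the ``at most one update per pair'' observation; these are what deliver $O(kn^2+n^2\log n)$.
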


\begin{theorem}\label{thm2}
(i) Algorithm {\sc 2-APSiSP} (Section~\ref{sec:2APSiSP}) correctly computes 2-APSiSP in $O(mn + n^2 \log n)$ time,
and for $k>2$, Algorithm {\sc APSiSP} (Section~\ref{sec:k-apsisp}) correctly computes $k$-APSiSP.\\
(ii) Let $T(m,n,k)$ be the time bound for Algorithm {\sc APSiSP}. 
\\ 
Then, $T(m,n,k) \leq n \cdot T(m,n,~k-1) + O(mn + k \cdot n^2 + n^2\cdot  \log n)$.\\
(iii) $T(m,n,3)$, the time bound for algorithm {\sc APSiSP} for $k=3$, is $O(m \cdot n^2 + n^3 \cdot \log n)$.
\end{theorem}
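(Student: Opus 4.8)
The plan is to reduce all three parts to two ingredients: the correctness and running time of the routine that constructs the sets $Q_k(x,y)$, and Theorem~\ref{thm1}, which already guarantees that \textsc{Compute-APSiSP} converts correct $Q_k$ sets into the desired $P^*_k(x,y)$ in $O(k\cdot n^2 + n^2\log n)$ time. For part (i) I would prove correctness by induction on $k$. The base case $k=2$ rests on the correctness of the $O(mn+n^2\log n)$-time construction of the $Q_2$ sets established in Section~\ref{sec:2APSiSP}, followed by a single application of Theorem~\ref{thm1}; since the $Q_2$ construction dominates and the \textsc{Compute-APSiSP} call costs only $O(n^2\log n)$, this also yields the claimed $O(mn+n^2\log n)$ bound for 2-APSiSP. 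For the inductive step I would assume that \textsc{APSiSP} correctly solves $(k-1)$-APSiSP and show that the $Q_k$-construction, which invokes $(k-1)$-APSiSP as a subroutine, produces the correct $Q_k(x,y)$ for every pair; feeding these into \textsc{Compute-APSiSP} and appealing to Theorem~\ref{thm1} then yields the $P^*_k$ sets.

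The heart of the correctness argument, and the step I expect to be the main obstacle, is verifying that $(k-1)$-APSiSP computations suffice to assemble each $Q_k(x,y)$. Here I would use the first-edge decomposition: a simple path from $x$ to $y$ whose first edge is $(x,a)$ is exactly the edge $(x,a)$ prepended to a simple path from $a$ to $y$ that avoids $x$, i.e.\ a simple path from $a$ to $y$ in $G\setminus\{x\}$. Hence the candidate paths from $x$ to $y$ with first edge $(x,a)$ are obtained by prepending $(x,a)$ to the simple shortest paths from $a$ to $y$ in $G\setminus\{x\}$. The crucial design property of Definition~\ref{def:qk} is that $Q_k(x,y)$ never contains more than $k-1$ paths sharing a common first edge: in the special case, the $k-1$ shared-first-edge paths are augmented by a single path avoiding that edge, while in the ``otherwise'' case the hypothesis that the top $k-1$ SiSPs do not all share a first edge forces at most $k-1$ of the $k$ paths of $P^*_k(x,y)$ to use any single first edge. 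This is precisely why running $(k-1)$-APSiSP, rather than $k$-APSiSP, on each deleted graph $G\setminus\{x\}$ already produces enough candidates: at most $k-1$ shortest paths from each neighbor $a$ of $x$ to $y$ in $G\setminus\{x\}$ are ever required. I would then check that selecting, per pair, the appropriate candidates from this pool (and, in the special case, the cheapest candidate whose first edge differs from $(x,a)$, which is likewise present among the candidates) reconstructs $Q_k(x,y)$ exactly as prescribed by Definition~\ref{def:qk}.

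For the recurrence in part (ii), I would observe that the $Q_k$-construction runs $(k-1)$-APSiSP once on each vertex-deleted graph $G\setminus\{v\}$ over all $v\in V$; since each such graph has at most $n$ vertices and $m$ edges, this contributes $n\cdot T(m,n,k-1)$. The remaining work consists of forming the $n$ deleted graphs, organizing the candidate paths produced and extracting the at most $k$ members of each $Q_k(x,y)$, and the final call to \textsc{Compute-APSiSP}, which costs $O(k\cdot n^2 + n^2\log n)$ by Theorem~\ref{thm1}. Bounding this bookkeeping by $O(mn + k\cdot n^2 + n^2\log n)$ gives $T(m,n,k)\le n\cdot T(m,n,k-1) + O(mn + k\cdot n^2 + n^2\log n)$, as claimed.

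Finally, part (iii) follows by substituting $k=3$ together with the base-case bound $T(m,n,2)=O(mn+n^2\log n)$ from part (i) into the recurrence of part (ii): this yields $T(m,n,3)\le n\cdot O(mn+n^2\log n) + O(mn + n^2 + n^2\log n) = O(m\cdot n^2 + n^3\log n)$, completing the argument.
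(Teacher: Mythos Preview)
Your proposal is correct and follows essentially the same approach as the paper: induction on $k$ with base case $k=2$ handled by Lemma~\ref{lem:q2}, the inductive step reducing $Q_k$-construction to $(k-1)$-APSiSP on vertex-deleted graphs via the first-edge decomposition, and the appeal to Theorem~\ref{thm1} (Lemmas~\ref{lemma:computeAPSiSP} and~\ref{lemma:runtimeComputeAPSiSP}) to convert $Q_k$ into $P^*_k$. One small discrepancy worth noting: the paper's \textsc{APSiSP} recurses on $G-I_x$ (removing only the incoming edges to $x$) rather than on $G\setminus\{x\}$; this is equivalent for computing simple $a$-to-$y$ paths avoiding $x$ but has the convenient side effect that $P^{*x}_{k-1}(x,y)=P^*_{k-1}(x,y)$ is produced directly in the same recursive call, which is how the paper achieves the $O(k+d_x)$ per-pair bookkeeping (via the $count_a$ trick in Step~\ref{alg4:Qkend}) that you implicitly rely on for the $O(mn+kn^2)$ term in part~(ii). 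Your argument that $Q_k(x,y)$ never contains more than $k-1$ paths sharing a first edge---and hence that $(k-1)$-APSiSP suffices---is in fact more explicit than the paper's treatment, which simply asserts correctness in Lemma~\ref{lemma:3APSiSP}.
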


\begin{theorem}\label{thm3}
(i) $k$-All-SiSC: After an initial start-up cost of $O(mn + n^2 \log n)$ time, we can compute each successive simple shortest cycle in $O(mn + n^2 \log\log n)$ time
(Section~\ref{sec:allsisp1}).

(ii)  $k$-All-SiSP: 
After an initial start-up cost of $O(m)$ time to generate the first path, Algorithm~{\sc All-SiSP} (Section~\ref{sec:allsisp2}) computes 
each succeeding simple shortest path with the following bounds:
amortized $O(k + \log n)$ time if $k=O(n)$ and  $O(n + \log k)$ time if $k= \Omega (n)$, or  worst-case 
 $O(k \cdot \log n)$ time if $k=O(n)$, and $O(n \cdot \log k)$ time if $k= \Omega (n)$.
\end{theorem}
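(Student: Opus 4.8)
The plan is to treat (i) and (ii) separately, since each rests on a different extension mechanism. For (i) I would run a Lawler--Murty style ranking scheme built on a minimum-weight-cycle subroutine. In the start-up phase I would compute one all-pairs shortest path table, giving distances $d(\cdot,\cdot)$ in $O(mn+n^2\log n)$ time; the global minimum-weight simple cycle is then $\min_{(u,v)\in E}(w(u,v)+d(v,u))$, and the shortest $v$-to-$u$ path together with the edge $(u,v)$ is automatically a \emph{simple} cycle, since under nonnegative weights a shortest path is simple and meets $u$ and $v$ only at its endpoints. To rank cycles I would canonicalize each simple cycle by rooting it at its minimum-index vertex, so that every cycle has a unique representation, and maintain a priority queue of subproblems, each specified by a root $v$, a forced simple prefix $v=v_0\to v_1\to\cdots\to v_{i-1}$, and one forbidden continuation edge. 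The next output cycle is obtained by extracting the subproblem of smallest cycle weight, outputting its cycle $C=v_0v_1\cdots v_{\ell-1}v_0$, and branching: for $i=1,\dots,\ell$ I create the child that forces the prefix $v_0\cdots v_{i-1}$ and forbids the edge of $C$ leaving $v_{i-1}$. These children partition exactly the cycles of the parent other than $C$, so each simple cycle is emitted once and in nondecreasing weight order. The minimum-weight cycle of a child equals its forced-prefix weight plus the shortest path from $v_{i-1}$ back to $v_0$ in the subgraph that retains only vertices of index $\ge v$, deletes the prefix's interior, and deletes the forbidden edge; because the deleted-vertex sets of the $\ell$ children are nested, these minima can all be computed within the cost of one all-pairs computation, giving $O(mn+n^2\log\log n)=O(APSP)$ per emitted cycle. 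The main obstacle here is the simplicity bookkeeping: I must argue that deleting the prefix's interior vertices before the shortest-path step forces vertex-disjointness (hence simplicity of the completed cycle), and that the minimum-index rooting makes the edge-branching a genuine partition with neither duplication nor omission.

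For (ii) I would run a lazy best-first search over simple paths whose only operation is appending a single edge to the end of a path. Correctness rests on a unique-decomposition lemma: every simple path $p$ with at least one edge factors uniquely as $p=p'\cdot e$, where $p'$ is $p$ minus its last edge, and since edge weights are nonnegative $w(p')\le w(p)$. Consequently, if I seed a min-priority queue with all single-edge paths (the $O(m)$ start-up, whose minimum is the first output path) and, whenever I extract a path, enqueue its simple one-edge extensions, then every simple path is enqueued exactly once, when its prefix is extracted, and paths leave the queue in nondecreasing weight order. A companion length lemma drives the $k$-versus-$n$ dichotomy: the $\ell$ proper prefixes of a length-$\ell$ path are all extracted before it, so the $i$-th emitted path has at most $\min(i,n)$ edges.

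To meet the stated per-path bounds I would generate extensions lazily, in the spirit of the locally-shortest-path structure of \cite{DI04}: extracting a path triggers only $O(1)$ new insertions --- its cheapest still-valid extension and the next-cheapest extension of its parent --- rather than one insertion per outgoing edge, so the queue holds only $O(k)$ entries after $k$ extractions. Keeping each vertex's outgoing edges in nondecreasing weight order, the work charged to the $i$-th output path is $O(\min(i,n))$ to walk and mark its vertex set (so each test $u\in V(p)$ costs $O(1)$ while scanning for the cheapest unvisited extension) plus $O(\log(\text{queue size}))$ for the heap operations. Summing $\sum_{i}\min(i,n)$ and dividing yields amortized $O(k+\log n)$ per path for $k=O(n)$ and $O(n+\log k)$ for $k=\Omega(n)$, whereas charging each heap operation its full logarithmic cost at the moment it occurs yields the worst-case $O(k\log n)$ and $O(n\log k)$ bounds. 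I expect the principal difficulty to be the amortized simplicity maintenance: showing that the cost of skipping outgoing edges that would revisit a vertex can be absorbed into the $O(\min(i,n))$ marking term, and that the lazy advancement of the ``next-cheapest extension'' pointer enumerates each valid extension once and never strands a path that should be output.
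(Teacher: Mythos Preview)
Your plan for part~(i) is essentially the paper's approach. The paper also canonicalizes each cycle by its minimum-index vertex and, in array $A$, keeps for each root $j$ the next not-yet-output simple cycle through $j$ in the subgraph on vertices $\geq j$; when that cycle is output, it advances to the next one by invoking $k$-SiSC through $j$, which reduces to $k$-SiSP and is handled by Yen/Gotthilf--Lewenstein. Your Lawler--Murty branching with forced prefixes and a forbidden edge is exactly Yen's deviation scheme applied directly to cycles rather than routed through the $k$-SiSC$\to$$k$-SiSP reduction, and your ``nested deletions'' observation is precisely why the $\ell$ children of one output cycle cost one APSP round. One small point: to land on the $n^2\log\log n$ term (rather than $n^2\log n$) you need the Gotthilf--Lewenstein refinement, not plain Dijkstra, for those child computations; you cite the target bound but do not say how you reach it.

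Your plan for part~(ii) is correct but takes a genuinely different route from the paper. The paper's {\sc All-SiSP} extends \emph{bidirectionally}: when a path $\pi$ with first edge $(x,a)$ and last edge $(b,y)$ is extracted, it forms $(x',x)\circ\pi$ for every $\pi_{x'b}\in L(\ell(\pi))$ and symmetrically on the right; the $L$ and $R$ sets are exactly the locally-shortest-path structure of~\cite{DI04}, so each extraction triggers $O(\min(k,n))$ insertions, and the heap can grow to $O(m+k\min(k,n))$. Your scheme is a one-sided ``first child plus next sibling'' lazy enumeration over the tree of right-extensions, with $O(1)$ insertions per extraction and heap size $O(m+k)$; correctness follows from the unique-prefix decomposition and the fact that a path's weight dominates all its prefixes and earlier siblings, so each path is inserted before it is needed. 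Two remarks. First, your citation of~\cite{DI04} is misplaced: the locally-shortest-path structure there is the bidirectional mechanism the paper uses, whereas your lazy scheme is closer in spirit to Eppstein/Frederickson heap selection. Second, your worst-case analysis undersells your own method: with $O(1)$ heap operations per extraction and $O(\min(i,n))$ for marking and skipping (the number of edges you skip is bounded by $|V(p')|$, not by the degree), a binary heap already gives $O(\min(i,n)+\log(m+k))$ per path, which is no worse than---and for small $k$ strictly better than---the $O(k\log n)$ you claim. Also, if you seed the queue with all $m$ edges, suppress the ``next sibling of parent'' step for length-one paths to avoid re-inserting edges already present; alternatively seed with only the cheapest outgoing edge per vertex.
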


For the most part, we only consider computing the weights of the paths. The actual paths can be maintained by
using pointers to sub-paths that omit the first or last edge on the path.

In terms of conditional hardness results, we show that 2-SiSP, 2-AVSiSC and $2$-All-SiSP are all
at least as hard as finding a minimum weight cycle. We also show that $k$-SiSP is equivalent in
complexity to $k$-SiSC. These results are presented in Section~\ref{sec:hardness}.

Table~\ref{table1} 
lists our main results. Together they give a fairly complete understanding of the 
fine-grained complexity of the various natural problems related to computing $k$ simple shortest
paths and cycles in a weighted graph, at least
for $k=2$, assuming that finding a minimum weight cycle in `sub-$mn$ time'
is hard. Of these, we highlight the following contributions:

\vspace{-.05in}
\begin{itemize}
 \item
 The algorithms for $k$-APSiSP (and especially for 2- and 3-APSiSP)
and for $k$-All-SiSP introduce the new technique of path extensions for this class of problems.

\item
We show that $k$-SiSP and $k$-SiSC are equivalent in complexity, but we provide a hardness 
result that shows that $k$-All-SiSC is harder than $k$-All-SiSP unless we can obtain a significantly faster method for Min-Wt-Cyc. It is nevertheless interesting that we can generate successive simple shortest cycles in 
$\tilde{O}(mn)$  time, given that the mere enumeration of simple cycles was a much-investigated classical topic until linear-time generation of successive cycles (in no particular order) was
given in~\cite{Joh75}. 

\item We connect the complexity of several problems related to finding  $k$ simple paths and $k$ simple cycles in sparse graphs to the complexity of computing a minimum weight cycle.
For the most part, 
previous hardness results  were only 
for dense graphs, and with respect to the presence of sub-cubic algorithms.

 \item We give related results for undirected graphs and for unweighted graphs in Section~\ref{sec:hardness}.
\end{itemize}

\section{The $k$-APSiSP Algorithm}\label{sec:apsisp}

In this section, we present our algorithm to compute  $k$-APSiSP on  a directed graph $G=(V,E)$ with nonnegative edge-weight function $wt$.
The 
algorithm has two main steps. In the first step it computes the nearly $k$-SiSP sets $Q_k (x,y)$ for all pairs  $x,y$. In the second step it computes the exact $k$-SiSP sets $P^*_k (x,y)$ for all $x,y$ using the $Q_k(x,y)$ sets. This second step is the same for any value of $k$, and we describe this step first
in
Section~\ref{sec:compute-pstar}. We then 
present efficient algorithms to compute the $Q_k$ sets for $k=2$ and $k>2$ in Section~\ref{sec:compute-qk}.

In all of our algorithms we will maintain the paths in each $P^{*}_{k}(x,y)$  and $Q_k(x,y)$ set in 
an array in nondecreasing order of edge-weights.

\subsection{The Compute-APSiSP Procedure}\label{sec:compute-pstar}
 
 In this section we
present an algorithm, {\sc Compute-APSiSP}, to compute $k$-APSiSP. 
 This algorithm takes as input, the graph $G$, together with the nearly $k$-SiSP sets $Q_k (x,y)$, for each pair of
 distinct vertices $x,y$, and outputs  the $k^*$ simple shortest paths from $x$ to $y$ in the set $P^{*}_{k}(x,y)$ 
 for each pair of vertices $x, y \in V$ (note that $k^*$, which is defined in Definition~\ref{def:qk}, can be different for different vertex pairs
$x,y$). 
 As noted above, the construction of the $Q_k(x,y)$ sets will be described in the next section.

 The {\it right (left) subpath} of a path $\pi$ is defined as the path obtained by removing the first
 (last) edge on $\pi$. If $\pi$ is a single edge $(x,y)$ then this path is the vertex $y$ ($x$).
 
 \begin{lemma} \label{lemma:rightSubpath}
Suppose there are $k$ simple shortest paths from $x$ to $y$, all having the same first edge $(x,a)$. Then $\forall i$, $1 \leq i \leq k$, the right subpath of the $i$-th simple shortest path from $x$ to $y$ has weight equal to the
weight of the  $i$-th simple shortest path from $a$ to $y$.
\end{lemma}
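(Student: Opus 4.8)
The plan is to prove the claimed equality by establishing two inequalities. Write $\pi_1,\ldots,\pi_k$ for the $k$ simple shortest paths from $x$ to $y$ in nondecreasing weight order, and let $w_i$ denote the weight of the $i$-th simple shortest path from $a$ to $y$. First I would fix notation by using the hypothesis to write $\pi_i=(x,a)\cdot\rho_i$, where $\rho_i$ is the right subpath of $\pi_i$. Since each $\pi_i$ is simple, each $\rho_i$ is a simple path from $a$ to $y$ that avoids $x$; the $\rho_i$ are pairwise distinct (they are the distinct $\pi_i$ with a common first edge removed); and since $wt(\pi_i)=wt(x,a)+wt(\rho_i)$ with a common first-edge weight, we have $wt(\rho_1)\le\cdots\le wt(\rho_k)$. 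In particular this exhibits $k$ distinct simple $a$-$y$ paths, so the $i$-th simple shortest $a$-$y$ path is well defined for every $i\le k$, and the goal is to show $wt(\rho_i)=w_i$.

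The easy inequality is $w_i\le wt(\rho_i)$: the paths $\rho_1,\ldots,\rho_i$ are $i$ distinct simple $a$-$y$ paths each of weight at most $wt(\rho_i)$, so at least $i$ simple $a$-$y$ paths have weight $\le wt(\rho_i)$, whence the $i$-th smallest such weight $w_i$ is at most $wt(\rho_i)$.

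The substance of the argument is the reverse inequality $w_i\ge wt(\rho_i)$, and the main obstacle is that a simple $a$-$y$ path $\sigma$ may pass through $x$; such a $\sigma$ cannot be turned into a simple $x$-$y$ path by prepending $(x,a)$, so it is not directly governed by the hypothesis. I would control these paths by decomposing $\sigma$ at $x$ as $\sigma=\tau\cdot\mu$, where $\tau$ runs from $a$ to $x$ and $\mu$ from $x$ to $y$. Then $\mu$ is a simple $x$-$y$ path, and because $\sigma$ is simple the vertex $a$ occurs only at the start of $\tau$, so $\mu$ avoids $a$ and in particular does not begin with the edge $(x,a)$. Hence $\mu\notin\{\pi_1,\ldots,\pi_k\}$, and since $\pi_1,\ldots,\pi_k$ are the $k$ shortest simple $x$-$y$ paths (so any simple $x$-$y$ path outside this set has weight at least $wt(\pi_k)$, by the greedy optimality of the SiSP sequence), nonnegativity of edge weights gives $wt(\sigma)\ge wt(\mu)\ge wt(\pi_k)=wt(x,a)+wt(\rho_k)\ge wt(\rho_k)\ge wt(\rho_i)$.

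To finish, I would count the simple $a$-$y$ paths of weight strictly less than $wt(\rho_i)$. By the previous paragraph none of them passes through $x$, so each avoids $x$; and via the correspondence $\rho\mapsto(x,a)\cdot\rho$, which is weight-preserving up to the additive constant $wt(x,a)$ between $x$-avoiding simple $a$-$y$ paths and simple $x$-$y$ paths starting with $(x,a)$, the paths $\rho_1,\ldots,\rho_k$ are exactly the $k$ lightest $x$-avoiding simple $a$-$y$ paths. Consequently any $x$-avoiding path of weight below $wt(\rho_i)$ must be one of $\rho_1,\ldots,\rho_{i-1}$, so fewer than $i$ simple $a$-$y$ paths have weight below $wt(\rho_i)$, forcing $w_i\ge wt(\rho_i)$. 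Combining the two inequalities yields $w_i=wt(\rho_i)$, as claimed.
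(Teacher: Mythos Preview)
Your proof is correct, and it rests on the same key insight as the paper's: any simple $a$--$y$ path passing through $x$ contains a simple $x$--$y$ subpath whose first edge is not $(x,a)$, and by the hypothesis that the $k$ shortest $x$--$y$ paths all begin with $(x,a)$, this subpath must have weight at least $wt(\pi_k)$. The structural packaging, however, is different. The paper argues by induction on $k$: assuming the first $h$ right subpaths match the first $h$ simple shortest $a$--$y$ paths, it supposes the $(h+1)$-th does not, deduces that some $j$-th simple shortest $a$--$y$ path with $j\le h+1$ must pass through $x$, extracts its $x$--$y$ subpath (which does not start with $(x,a)$), and obtains a contradiction to the hypothesis. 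You instead prove the equality $wt(\rho_i)=w_i$ for all $i$ at once via two inequalities, with the same subpath-extraction idea entering only to bound the weight of $a$--$y$ paths through $x$ from below by $wt(\rho_k)$, after which a counting argument on the $x$-avoiding side finishes. Your route avoids induction and the contradiction detour, and the bijection $\rho\mapsto(x,a)\circ\rho$ between $x$-avoiding simple $a$--$y$ paths and simple $x$--$y$ paths with first edge $(x,a)$ makes the counting transparent; the paper's argument is shorter to state but leaves a bit more to the reader.

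One small remark: your sentence ``the paths $\rho_1,\ldots,\rho_k$ are exactly the $k$ lightest $x$-avoiding simple $a$--$y$ paths'' is slightly stronger than what you need in the presence of ties, but the weaker claim you actually use---that any $x$-avoiding simple $a$--$y$ path of weight strictly below $wt(\rho_i)$ must lie in $\{\rho_1,\ldots,\rho_{i-1}\}$---follows directly from the greedy definition of the $\pi_j$ (if $wt(\pi)<wt(\pi_i)$ then $\pi\in\{\pi_1,\ldots,\pi_{i-1}\}$) together with your bijection, so the argument goes through unchanged.
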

\vspace{-0.2in}
\begin{proof}
By induction on $k$. Since subpaths of shortest paths are shortest paths, the statement holds for $k=1$.
Assume the statement is true for all $h \leq k$, and consider the case when the $h+1$ simple shortest paths from
$x$ to $y$ all share the same first edge $(x,a)$. Inductively, the right subpath of each of the first $h$ simple shortest paths have the weight equal to the corresponding simple shortest paths from $a$ to $y$. Suppose the weight of the right subpath $\pi_{a,y}$ of the $(h+1)$-th simple shortest path from $x$ to $y$ is not equal to the weight of the $(h+1)$-th simple shortest path from $a$ to $y$. 
Hence, if $\pi^{'}_{a,y}$ is the $(h+1)$-th simple shortest path from $a$ to $y$, we must have $wt(\pi_{a,y}) > wt(\pi^{'}_{a,y})$.

Since  $\pi_{xa,y}$ is the $(h+1)$-th simple shortest path from $x$ to $y$ and $wt(\pi_{a,y}) > wt(\pi^{'}_{a,y})$, there exists at least one path from $a$ to $y$ that contains $x$ and is also the $j$-th simple shortest path from $a$ to $y$, where $j \leq h+1$. Let this path be $\pi^{''}_{a,y}$. Let the subpath of $\pi^{''}_{a,y}$ from $x$ to $y$ be $\pi^{''}_{xa',y}$. But then $wt(\pi^{''}_{xa',y}) < wt(\pi^{''}_{a,y}) \leq wt(\pi^{'}_{a,y}) < wt(\pi_{a,y}) < wt(\pi_{xa,y})$. But this is a contradiction to our assumption that all the first $h+1$ simple shortest paths from $x$ to $y$ contains $(x,a)$ as the first edge. 
This contradiction establishes the induction step and the lemma.
\end{proof}

 Algorithm {\sc Compute-APSiSP} computes the $P^{*}_k(x,y)$ sets 
by extending an existing path by an edge. In particular, if the $k$-SiSPs from $x$ to $y$ all use the same first edge $(x,a)$, then it computes the $k$-th SiSP  by extending the $k$-th SiSP from $a$ to $y$  (otherwise, the sets $P^{*}_k(x,y)$ 
 are trivially computed from the sets $Q_k (x,y)$).
The algorithm first initializes the $P^*_k (x,y)$ sets with the corresponding $Q_k(x,y)$ sets
in Step \ref{alg2:addFromQk}. 
In Step \ref{alg2:ifinit}, it checks whether the shortest $k-1$ paths in $P^{*}_k(x,y)$ have the same first edge and if so,
by definition of $Q_k(x,y)$, this  $P^{*}_k(x,y)$ may not have been correctly initialized, and
may need to
update  
its
$k$-th shortest path to obtain the correct output. 
In this case, the common first edge $(x,a)$ is added to the set $\MarkEdges(a,y)$ in Step \ref{alg2:addToMarkEdges}. We explain this step below.

\begin{algorithm}[H]
\scriptsize
\begin{algorithmic}[1]
\State \underline{Initialize:}	\label{alg2:init}
\State $H \leftarrow \phi$ $~~~$ \{$H$ is a priority queue.\}
\ForAll {$x,y \in V, x \neq y$}	\label{alg2:for1}
 \State   $P^{*}_{k}(x,y) \leftarrow Q_k(x,y)$	\label{alg2:addFromQk}
 \If{the $k-1$ shortest paths in $P^{*}_{k}(x,y)$ have the same first edge }\label{alg2:ifinit}
\State Let $(x,a)$ be the common first edge in the $(k-1)$ shortest paths in $P^{*}_{k}(x,y)$	\label{alg2:assignCommon}
\State Add $(x,a)$ to the set $\MarkEdges(a,y)$	\label{alg2:addToMarkEdges}
\If{$|Q_k(a,y)| = k$}	\label{alg2:if1}
\State $\pi \leftarrow$ the path of largest weight in $Q_{k}(a,y)$
\State $\pi' \leftarrow (x,a) \circ \pi$
\State Add $\pi'$ to $H$ with weight $wt(x,a) + wt(\pi)$	\label{alg2:addToH1}
\EndIf	\label{alg2:endIf1}
\EndIf	\label{alg2:endFor2}
\EndFor\label{alg2:endFor1}
\State \underline{Main Loop:}
\While{$H \neq \phi$}	\label{alg2:while}
 \State $\pi \leftarrow$ {\sc Extract-min}$(H)$	\label{alg2:extractMin}
 \State Let $\pi = (xa, y)$ and let the path of largest weight in $P^{*}_{k}(x,y)$ be $\pi'$	\label{alg2:letPi}
 \State  {\bf if} {$|P^{*}_k(x,y)| = k-1$} {\bf then} 	add $\pi$ to $P^{*}_{k}(x,y)$ and set {\it update} flag \label{alg2:if2} \label{alg2:addToP1}
\State {\bf else} {\bf if} $wt(\pi) < wt (\pi')$ {\bf then} replace $\pi'$ with $\pi$ in $P^{*}_{k}(x,y)$ and set {\it update} flag	\label{alg2:addToP2}
 \If{{\it update} flag is set}
 	\State {\bf for all}  {$(x',x) \in \MarkEdges(x,y)$} {\bf do}  add $(x',x) \circ \pi$ to $H$ with weight $wt(x',x) + wt(\pi)$	\label{alg2:addToH2}
 \EndIf 
  \EndWhile	\label{alg2:endWhile}
\end{algorithmic}
\caption{{\sc Compute-APSiSP}$(G=(V,E), wt, k, \{Q_k(x,y), \forall x,y\})$}
\end{algorithm}\label{alg:compute-apsisp}

We define the {\it $k$-Left Extended Simple Path ($k$-LESiP) $\pi_{xa,y}$}  from $x$ to $y$ as the path
$\pi_{xa,y} = (x,a) \circ \pi_{a,y}$, where the path $\pi_{a,y}$ is the $k$-th shortest path
in $Q_k (a,y)$, and $\circ$ denotes the concatenation operation. In our algorithm we will construct 
$k$-LESiPs
for those pairs $x,y$ for which the $k-1$ simple shortest paths all start with the edge $(x,a)$.
The algorithm also maintains a set $\MarkEdges(a,y)$ for each pair of distinct vertices $a,y$; this set
contains those edges $(x,a)$ incoming to $a$ which are the first edge on all $k-1$ SiSPs from $x$ to $y$.
In addition to adding
the common first edge $(x,a)$ in the  $(k-1)$ SiSPs in $P^{*}_k(x,y)$ to   $\MarkEdges(a,y)$ in Step \ref{alg2:addToMarkEdges}, the algorithm creates 
the $k$-LESiP with start edge $(x,a)$ and end vertex $y$ 
using the $k$-th shortest path in the set $P^{*}_k(a,y)$,
and adds it to heap $H$ in  Steps~\ref{alg2:if1}-\ref{alg2:endIf1}. Let $\mathcal{U}$ denote the set
of $P^*_k (x,y)$ sets which may need to be updated; these are the sets for which the if 
condition in Step~\ref{alg2:ifinit} holds.

In the main while loop in Steps \ref{alg2:while}-\ref{alg2:endWhile}, a min-weight path is extracted in each iteration. We establish
below that  this min-weight path is added to the corresponding   $P^{*}_k$ in Step \ref{alg2:addToP1} or \ref{alg2:addToP2} only
if it is the $k$-th SiSP; in this case, its left extensions are created and added to the heap $H$ in Step \ref{alg2:addToH2}.

\begin{lemma}	\label{lemma:computeAPSiSP}
Let $G=(V,E)$ be a directed graph with nonnegative edge weight function $wt$, and $\forall x,y \in V$, 
let the set $Q_{k}(x,y)$ contain the nearly $k$-SiSPs from $x$ to $y$. 
Then,
algorithm {\sc Compute-APSiSP} correctly computes the sets $P^{*}_{k}(x,y)$ $\forall x,y \in V$.
\end{lemma}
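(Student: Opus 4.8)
The plan is to reduce the claim to the pairs in $\mathcal{U}$ (those passing the test of Step~\ref{alg2:ifinit}) and then run a Dijkstra-style induction on the weights extracted from $H$. For $(x,y)\notin\mathcal{U}$, Definition~\ref{def:qk} gives $Q_k(x,y)=P^{*}_k(x,y)$, so Step~\ref{alg2:addFromQk} already stores the correct output; I would first observe that these pairs are never altered afterwards, because a path is pushed onto $H$ only for a pair in $\mathcal{U}$: the only edges ever placed in the sets $\MarkEdges(\cdot,\cdot)$ are the common first edges of $\mathcal{U}$-pairs (Step~\ref{alg2:addToMarkEdges}), and both push sites (Steps~\ref{alg2:addToH1} and~\ref{alg2:addToH2}) push only for such pairs. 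Hence it suffices to show that the while-loop installs the correct $k$-th entry of $P^{*}_k(x,y)$ for every $(x,y)\in\mathcal{U}$, whose common first edge I call $(x,a)$.

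Write $d_j(u,v)$ for the weight of the $j$-th SiSP from $u$ to $v$. The decisive input is Lemma~\ref{lemma:rightSubpath}, and I would split on whether the true $k$-th SiSP from $x$ to $y$ starts with $(x,a)$. If it does, then all $k$ SiSPs share $(x,a)$, and Lemma~\ref{lemma:rightSubpath} yields $d_k(x,y)=wt(x,a)+d_k(a,y)$, with the $k$-th SiSP equal to $(x,a)\circ\sigma$ for a simple $(a,y)$-path $\sigma$ of weight $d_k(a,y)$ avoiding $x$; this is exactly the left extension the algorithm is built to find. If it does not, the $k$-th SiSP is the shortest $(x,y)$-path avoiding $(x,a)$, which is precisely the top of $Q_k(x,y)$ already stored in Step~\ref{alg2:addFromQk}. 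The technical heart is a matching lower bound: every left extension $(x,a)\circ\pi$ the algorithm forms for $(x,y)$ --- where $\pi$ is either the top of $Q_k(a,y)$ (Step~\ref{alg2:addToH1}) or a confirmed $k$-th SiSP from $a$ to $y$ (Step~\ref{alg2:addToH2}), so $wt(\pi)\ge d_k(a,y)$ --- has weight at least $d_k(x,y)$. I would prove this using Lemma~\ref{lemma:rightSubpath} on the first $k-1$ SiSPs from $x$ to $y$, whose right subpaths are $k-1$ distinct simple $(a,y)$-paths of weights $d_1(a,y),\dots,d_{k-1}(a,y)$: if $x\notin\pi$ then $(x,a)\circ\pi$ is a simple $(x,y)$-path distinct from these $k-1$ paths, hence of weight $\ge d_k(x,y)$; if $x\in\pi$ then the $x$-to-$y$ suffix of $\pi$ is a simple $(x,y)$-path not starting with $(x,a)$, of weight $<wt(x,a)+wt(\pi)$ and at least the weight of the shortest $(x,y)$-path avoiding $(x,a)$, which is itself $\ge d_k(x,y)$ (being distinct from the first $k-1$ SiSPs). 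Either way $wt(x,a)+wt(\pi)\ge d_k(x,y)$, and in the starts-with-$(x,a)$ case equality is attained by taking $\pi$ a $k$-th SiSP from $a$ to $y$ avoiding $x$. Thus the minimum-weight extension ever created for $(x,y)$ has weight exactly $d_k(x,y)$.

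With these bounds I would argue correctness of the while-loop by induction on the order of extraction from $H$. Extractions proceed in nondecreasing key order, since every path pushed in Step~\ref{alg2:addToH2} has weight $wt(x',x)+wt(\pi)\ge wt(\pi)$, i.e.\ no less than the key just extracted. Breaking weight ties by path length (the $(wt,len)$ convention noted for zero-weight edges) makes $d_k(a,y)$ strictly precede $d_k(x,y)=wt(x,a)+d_k(a,y)$ in key even when $wt(x,a)=0$, so the dependency pair $(a,y)$ is resolved strictly before $(x,y)$. Hence by the time the extracted key reaches $d_k(x,y)$, either $Q_k(a,y)$ already held the correct top at initialization, or $(a,y)$ has been resolved and Step~\ref{alg2:addToH2} has pushed the correct extension of weight $d_k(x,y)$ for $(x,y)$ (using $(x,a)\in\MarkEdges(a,y)$). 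By the lower bound this is the first extension for $(x,y)$ to be extracted, so Step~\ref{alg2:if2} (when $|Q_k(x,y)|=k-1$) or Step~\ref{alg2:addToP2} installs exactly $d_k(x,y)$, while every later, heavier extension for $(x,y)$ fails the test in Step~\ref{alg2:addToP2} and is discarded.

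The step I expect to be hardest is coupling the weight lower bound with the strict resolution order so that the \emph{first} extension extracted for a pair is the correct one --- this is essential for the $|Q_k(x,y)|=k-1$ branch of Step~\ref{alg2:if2}, where an overestimate would be locked in as the $k$-th entry and then wrongly propagated. Two degenerate situations need explicit care: zero-weight edges, which I handle through the length tie-break so that a subpath is strictly lighter than its extension; and pairs $(x,y)$ with fewer than $k$ simple paths, all beginning with $(x,a)$, where a left extension $(x,a)\circ\pi$ can fail to be simple (when $x\in\pi$) or merely duplicate an existing path. For correctness I must confirm that such extensions are filtered --- an extension is used only when it is a simple path new to $P^{*}_k(x,y)$ --- so that no spurious $k$-th entry is created; this is exactly the property that Lemma~\ref{lemma:rightSubpath} guarantees in the legitimate case, where the extending suffix $\sigma$ provably avoids $x$.
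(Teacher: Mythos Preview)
Your approach is sound and organized differently from the paper's. The paper proceeds by two minimal-counterexample arguments: first it shows every path the algorithm installs is simple (taking a first non-simple installed path $\pi_{xa,y}$, its simple right subpath $\pi_{a,y}$ must contain $x$, and the $(x,y)$-suffix of $\pi_{a,y}$ is then an $(x,y)$-path avoiding $(x,a)$, contradicting the structure of $Q_k(x,y)$ in both the $|Q_k|=k-1$ and $|Q_k|=k$ branches); second it shows no $k$-th SiSP is missed (taking a minimum-weight missed one and invoking Lemma~\ref{lemma:rightSubpath} to show it would have been pushed and installed). You instead run a forward Dijkstra-style induction behind an explicit weight lower bound on every extension. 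The two routes share the same core devices---Lemma~\ref{lemma:rightSubpath} and the $x$-suffix contradiction---but yours makes the ``first extraction is correct'' mechanism explicit. In fact your lower bound already delivers simplicity of the installed path: in the $x\in\pi$ branch you get a \emph{strict} inequality $wt(x,a)+wt(\pi)>d_k(x,y)$ (via the length tie-break), so any extraction achieving the minimum $d_k(x,y)$ necessarily has $x\notin\pi$ and is simple. That is a clean way to fold the paper's separate simplicity argument into the weight analysis.

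The one genuine gap is the degenerate case you flag at the end (fewer than $k$ simple $(x,y)$-paths, all starting with $(x,a)$). You propose to ``confirm that such extensions are filtered,'' but the algorithm has no simplicity or duplication test in Steps~\ref{alg2:if2}--\ref{alg2:addToP2}; a spurious extension, once extracted, would be installed by Step~\ref{alg2:if2}. The correct resolution is that no extension is ever \emph{pushed} for such a pair. Since every simple $(x,y)$-path begins with $(x,a)$, no simple $(a,y)$-path can contain $x$ (its $x$-suffix would be an $(x,y)$-path avoiding $(x,a)$), so $\pi\mapsto(x,a)\circ\pi$ is a bijection between simple $(a,y)$-paths and simple $(x,y)$-paths. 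Hence there are fewer than $k$ simple $(a,y)$-paths, $|Q_k(a,y)|<k$, Step~\ref{alg2:if1} fails, and $P^*_k(a,y)$ can never acquire a $k$-th entry, so Step~\ref{alg2:addToH2} never fires for $(x,y)$ either. With this bijection argument in place of ``filtering,'' your proof is complete.
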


\vspace{-0.2in}
\noindent

\begin{proof}
First, we need to show that the paths in sets $P^{*}_{k}(x,y)$ are indeed simple. Clearly, the paths added to $P^{*}_k$ from sets $Q_{k}$ in Step \ref{alg2:addFromQk} are already simple (from the definition of $Q_k$). So we only need to show that the paths
 added to $P^{*}_{k}$ in Steps  \ref{alg2:addToP1} and \ref{alg2:addToP2} are
  simple. To the contrary assume that some of the paths that are added to $P^{*}_{k}$ are non-simple. Clearly these paths must be of length greater than 1. Let $\pi_{xa,y} = x \rightarrow a \rightsquigarrow y$ be the first minimum weight path extracted from
  $H$  that contains a cycle and was added to $P^{*}_{k}$ in Step \ref{alg2:addToP1} or \ref{alg2:addToP2}. Clearly, $P^{*}_k(x,y) \in \mathcal{U}$ and $(x,a) \in \MarkEdges(a,y)$ and the right subpath $\pi_{a,y}$ must be in $P^{*}_{k}$ (otherwise the path $\pi_{xa,y}$ would never have been added to heap $H$ in Step \ref{alg2:addToH1} or \ref{alg2:addToH2}). The right subpath $\pi_{a,y}$ must also be simple (as $wt(\pi_{a,y}) < wt(\pi_{xa,y})$), and it must contain $x$ in order to create a cycle in $\pi_{xa,y}$.
   Let $\pi_{xa',y}$ ($a' \neq a$) be the subpath of $\pi_{a,y}$ from $x$ to $y$. Now there are two cases depending on whether $\pi_{xa,y}$ was added to $P^{*}_k$ in Step \ref{alg2:addToP1} or \ref{alg2:addToP2}.

If $\pi_{xa,y}$ was added to $P^{*}_k(x,y)$ in Step \ref{alg2:addToP1} and as $P^{*}_k(x,y) \in \mathcal{U}$ , it implies that all $k-1$ paths in $Q_k(x,y)$ have same first edge $(x,a)$ and there is no simple path from $x$ to $y$ 
in $Q_k(x,y)$
 with some first edge $(x,a'') \neq (x,a)$. This is a contradiction as the subpath $\pi_{xa',y}$ of $\pi_{a,y}$ contains $(x,a') \neq (x,a)$ as its first edge. 

Otherwise, let $\pi_{xa'',y} \in Q_k(x,y)$ ($a'' \neq a$) be the path
 that was removed from $P^{*}_k$ in Step \ref{alg2:addToP2} to accommodate $\pi_{xa,y}$. Thus, we have $ wt(\pi_{xa',y}) < wt(\pi_{xa,y}) < wt(\pi_{xa'',y}) $, which is a contradiction as $\pi_{xa'',y} \in Q_k(x,y)$ and is the shortest path from $x$ to $y$ avoiding edge $(x,a)$ (as the other $k-1$ shortest paths in $Q_k(x,y)$ have $(x,a)$ as the first edge). As path $\pi_{xa,y}$ is arbitrary, hence all paths in $P^{*}_k$ are simple.

Now we need to show that $P^{*}_k (x,y)$ indeed contains the $k^*$ SiSPs from $x$ to $y$.

From the definition of $Q_k(x,y)$, it is evident that $P^{*}_k(x,y)$ indeed contains the $k-1$ SiSPs from $x$ to $y$. 
We now need to show that the $k$-th shortest path in each of the sets $P^{*}_k$ is indeed the corresponding $k$-th SiSP. To the contrary assume that there exists a $P^{*}_k$ set that does not contain the correct $k$-th SiSP. Let $\pi_{xa,y} = x \rightarrow a \rightsquigarrow y$ be the minimum weight $k$-th SiSP that is not present in $P^{*}_k$. Clearly, $\pi_{xa,y} \notin Q_k(x,y)$ (otherwise it would have been added to $P^{*}_k(x,y)$ in Step \ref{alg2:addFromQk}). 
This implies that $\pi_{xa,y}$ has the same first edge as that of the $k-1$ SiSPs from $x$ to $y$ and hence $P^{*}_{k}(x,y) \in \mathcal{U}$ and $(x,a) \in \MarkEdges(a,y)$. By 
Lemma \ref{lemma:rightSubpath}, the right subpath of $\pi_{xa,y}$ must have weight equal to the $k$-th SiSP from $a$ to $y$. Thus, there are at least $k$ SiSPs from $a$ to $y$ and 
the set $P^{*}_k(a,y)$ contains all the $k$ SiSPs from $a$ to $y$. And as $(x,a) \in \MarkEdges(a,y)$, a path $\pi^{'}_{xa,y}$ with the $k$-th SiSP from $a$ to $y$ as the right subpath and weight equal to $wt(\pi_{xa,y})$ must have been added to $H$ either in Step \ref{alg2:addToH1} or \ref{alg2:addToH2} and would have been added to $P^{*}_k(x,y)$ in Step \ref{alg2:addToP1} or \ref{alg2:addToP2}, resulting in a contradiction to our assumption that $P^{*}_k(x,y)$ does not contain all the $k$ SiSPs. Thus, $P^{*}_k (x,y)$  does contain the $k^*$ SiSPs from $x$ to $y$. 
\end{proof}

The time bound for Algorithm {\sc Compute-APSiSP}  in Theorem~\ref{thm1}
is established with the following sequence of simple lemmas.

\begin{lemma}\label{lemma:markEdges}
There are $O(kn^2)$ paths in $P^{*}_k$, and $O(n^2)$ elements across all  \MarkEdges{} sets.
\end{lemma}
\vspace{-0.2in}

\begin{proof}
$|P^*_k (x,y)| = O(kn^2)$ since 
 there are at most $k$ paths in each of the $n \cdot (n-1)$ sets  $P^{*}_{k}(x,y)$. For the second part, exactly one
edge is contributed to a \MarkEdges{} set by each $P^*_k (x,y) \in \mathcal U$ in 
Step~\ref{alg2:addToMarkEdges}.
\end{proof}

\begin{lemma}	\label{lemma:update}
 Each $P^{*}_{k}(x,y)$ set is updated at most once
in the main while loop.
\end{lemma}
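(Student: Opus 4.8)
The plan is to pin down exactly which paths extracted from $H$ can trigger an update of a fixed set $P^*_k(x,y)$, and then exploit the nondecreasing order in which $H$ releases its paths. First I would argue that only sets in $\mathcal{U}$ are ever updated, and that every path that updates such a set begins with the common first edge $(x,a)$ of its $k-1$ SiSPs and ends at $y$. This follows by tracing the two places where paths enter $H$: in Step~\ref{alg2:addToH1} a path $(x,a)\circ\pi$ is inserted only for a set $P^*_k(x,y)\in\mathcal{U}$ whose common first edge is $(x,a)$, and in Step~\ref{alg2:addToH2} a path $(x',x)\circ\pi$ is inserted only when $(x',x)\in\MarkEdges(x,y)$, which by Step~\ref{alg2:addToMarkEdges} is the common first edge of some $P^*_k(x',y)\in\mathcal{U}$. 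Hence every candidate that can update $P^*_k(x,y)$ has the form $(x,a)\circ\sigma$ with $\sigma$ a path from $a$ to $y$.

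Next I would record a weight bound on these candidates. Each such $\sigma$ is the largest-weight (i.e.\ $k$-th) path of $Q_k(a,y)$ (Step~\ref{alg2:addToH1}) or of $P^*_k(a,y)$ right after it is updated (Step~\ref{alg2:addToH2}); in either case $wt(\sigma)$ is at least the weight of the $(k-1)$-th SiSP from $a$ to $y$. Combining this with Lemma~\ref{lemma:rightSubpath}, which identifies the right subpath of the $(k-1)$-th SiSP from $x$ to $y$ with the $(k-1)$-th SiSP from $a$ to $y$, yields $wt((x,a)\circ\sigma)\ge wt(\text{$(k-1)$-th SiSP from $x$ to $y$})$. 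Thus every candidate is at least as heavy as each of the $k-1$ SiSPs already held in $P^*_k(x,y)$ from initialization (which, as in Lemma~\ref{lemma:computeAPSiSP}, are never removed).

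Now I would run the main-loop argument. Since $H$ is a min-priority queue, the candidates for $P^*_k(x,y)$ are extracted in nondecreasing weight order; let $\pi_1$ be the first one that actually performs an update in Step~\ref{alg2:addToP1} or~\ref{alg2:addToP2}. By the weight bound above, after this update $\pi_1$ is the unique largest-weight path $\pi'$ of $P^*_k(x,y)$: in Step~\ref{alg2:addToP1} it is appended to a set of $k-1$ lighter SiSPs, and in Step~\ref{alg2:addToP2} it replaces the old largest path (with strictly smaller weight) while still dominating the $k-1$ SiSPs. Any later candidate $\pi_2$ then satisfies $wt(\pi_2)\ge wt(\pi_1)=wt(\pi')$ by the extraction order, so the set already has $k$ elements, ruling out Step~\ref{alg2:addToP1}, and the strict test $wt(\pi_2)<wt(\pi')$ of Step~\ref{alg2:addToP2} fails; hence no second update occurs. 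If instead no candidate ever updates the set, it is updated zero times. Either way $P^*_k(x,y)$ is updated at most once.

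The main obstacle is establishing that the updating path $\pi_1$ becomes, and remains, the largest-weight element of $P^*_k(x,y)$; this is precisely what lets the strict inequality in Step~\ref{alg2:addToP2} block every further update, and it is exactly the point where Lemma~\ref{lemma:rightSubpath} must be combined with the fact that each candidate is built from the $k$-th (largest-weight) path of the $a$-to-$y$ set. Once this monotonicity is in place, the heap's nondecreasing extraction order closes the argument immediately.
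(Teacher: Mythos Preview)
Your argument is correct in spirit and considerably more careful than the paper's, which dispatches the lemma in two sentences: Step~\ref{alg2:addToP1} can fire only once because the set grows to size $k$, and the combination of Steps~\ref{alg2:addToP1}--\ref{alg2:addToP2} fires only once ``since paths are extracted from $H$ in nondecreasing order of their weights.'' The paper leaves implicit precisely the point you isolate as the main obstacle---that the first updating path $\pi_1$ becomes the new maximum $\pi'$---and you supply this via Lemma~\ref{lemma:rightSubpath} and the lower bound $wt((x,a)\circ\sigma)\ge wt(\text{$(k{-}1)$-th SiSP from $x$ to $y$})$. That is a genuine addition: without it, one cannot rule out the scenario where $\pi_1$ slots in below the $(k{-}1)$-th SiSP and a later, still-lighter-than-$p_{k-1}$ candidate triggers Step~\ref{alg2:addToP2} again.

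One small wrinkle: your justification of the weight bound for candidates inserted in Step~\ref{alg2:addToH2} asserts that $\sigma$ is ``the largest-weight path of $P^*_k(a,y)$ right after it is updated,'' which is exactly the property you are in the process of establishing for $P^*_k(x,y)$. This is not fatal---the bound you actually need, $wt(\sigma)\ge wt(\text{$(k{-}1)$-th SiSP from $a$ to $y$})$, follows cleanly by induction on extraction order: the base case is Step~\ref{alg2:addToH1}, and for Step~\ref{alg2:addToH2} the inductive hypothesis gives the bound for the earlier-extracted $\sigma$, after which Lemma~\ref{lemma:rightSubpath} transfers it to $(x,a)\circ\sigma$. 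Stating it this way removes the apparent circularity.
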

\vspace{-0.2in}

\begin{proof}
 A path can be added to $P^*_k(x,y)$
 at most once in Step~\ref{alg2:addToP1} since its
size will increase to $k$ after the addition. Also, a path is added at most once in either
Step~\ref{alg2:addToP1} or Step~\ref{alg2:addToP2} since paths are extracted from $H$ in
nondecreasing order of their weights.
\end{proof}

\begin{lemma}	\label{lemma:LESiP}
The number of  $k$-LESiPs added to heap $H$ is $O(n^2)$.
\end{lemma}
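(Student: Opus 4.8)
The plan is to bound separately the number of $k$-LESiPs placed on $H$ at the two program points where such additions happen --- the initialization loop (Step~\ref{alg2:addToH1}) and the main while loop (Step~\ref{alg2:addToH2}) --- and then add the two contributions. Since the claimed bound is $O(n^2)$, I would aim to show each of the two contributions is itself $O(n^2)$.

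First I would handle the initialization phase. A path is added in Step~\ref{alg2:addToH1} only inside the body of the \texttt{if} at Step~\ref{alg2:ifinit}, i.e., only for ordered pairs $(x,y)$ with $P^{*}_k(x,y) \in \mathcal{U}$, and at most one such path is created per pair (the single extension $(x,a)\circ\pi$ using the common first edge). As there are at most $n(n-1)$ ordered pairs, this accounts for $O(n^2)$ of the $k$-LESiPs on $H$.

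Next I would bound the additions made in the main loop at Step~\ref{alg2:addToH2}. These are guarded by the \emph{update} flag being set for the current pair $(x,y)$. The key tool here is Lemma~\ref{lemma:update}, which guarantees that each set $P^{*}_k(x,y)$ is updated at most once during the main loop; hence Step~\ref{alg2:addToH2} executes at most once per pair $(x,y)$. On that single execution, exactly $|\MarkEdges(x,y)|$ paths are enqueued, one for each incoming edge $(x',x)\in\MarkEdges(x,y)$. Consequently the total number of paths added in Step~\ref{alg2:addToH2} over the whole run is at most $\sum_{x,y}|\MarkEdges(x,y)|$, which is precisely the total number of elements stored across all $\MarkEdges$ sets. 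By Lemma~\ref{lemma:markEdges} this total is $O(n^2)$. Adding the two contributions gives $O(n^2)+O(n^2)=O(n^2)$ $k$-LESiPs, as claimed.

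The step I expect to be the crux is the guarantee that Step~\ref{alg2:addToH2} fires at most once per pair $(x,y)$: this is exactly what lets me replace a potentially repeated sum over updates by the single quantity $\sum_{x,y}|\MarkEdges(x,y)|$ and then invoke the $O(n^2)$ bound on the $\MarkEdges$ sets. Without the at-most-once property from Lemma~\ref{lemma:update}, a pair could trigger the extension loop several times and the elementary counting would no longer yield $O(n^2)$. Everything else is routine bookkeeping, so the proof reduces to citing Lemmas~\ref{lemma:update} and~\ref{lemma:markEdges} together with the per-pair bound on Step~\ref{alg2:addToH1}.
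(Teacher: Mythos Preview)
Your proof is correct and follows essentially the same approach as the paper: both split the count between Step~\ref{alg2:addToH1} and Step~\ref{alg2:addToH2}, use the at-most-once update property (Lemma~\ref{lemma:update}) to control Step~\ref{alg2:addToH2}, and arrive at $O(n^2)$ for each part. The only cosmetic difference is that the paper counts per destination pair $(x,y)$ (at most one edge $(x,a)\in\MarkEdges(a,y)$, hence at most one addition at each step for that pair), whereas you sum $\sum_{x,y}|\MarkEdges(x,y)|$ globally and invoke Lemma~\ref{lemma:markEdges}; these are the same count viewed from two sides.
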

\vspace{-0.2in}

\begin{proof}
For each $k$-LESiP, the right subpath must be the $k$-th shortest path in $P^{*}_k$. For each pair of vertices $x,y \in V$, there is at most one entry across the \MarkEdges{}  sets (say edge $(x,a) \in \MarkEdges(a,y)$) and hence at most one $k$-LESiP will be added to heap $H$ in Step \ref{alg2:addToH1} for pair $(x,y)$. By lemma \ref{lemma:update}, we know that the set $P^{*}_k(a,y)$ is updated at most once and hence at most one $k$-LESiP will be added to heap $H$ for pair $(x,y)$ in Step \ref{alg2:addToH2}. Thus, there are only $O(n^2)$ $k$-LESiPs that were added to the heap $H$ in the algorithm.
\end{proof}

\vmone
\begin{lemma}	\label{lemma:runtimeComputeAPSiSP}
Algorithm {\sc Compute-APSiSP} runs in $O(kn^2 + n^2\log n)$ time.
\end{lemma}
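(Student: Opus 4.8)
The plan is to account separately for the initialization phase (Steps~\ref{alg2:for1}--\ref{alg2:endFor1}) and the main while loop (Steps~\ref{alg2:while}--\ref{alg2:endWhile}), and to use Lemmas~\ref{lemma:markEdges}--\ref{lemma:LESiP} to bound the number of heap operations. The key preliminary observation I would make is that \emph{every} insertion into $H$ (whether in Step~\ref{alg2:addToH2} or the initialization) creates a $k$-LESiP, so Lemma~\ref{lemma:LESiP} implies that $H$ receives only $O(n^2)$ insertions over the entire run. Consequently $|H| = O(n^2)$ at all times, and each {\sc Extract-min} or insertion costs $O(\log n^2) = O(\log n)$.

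For the initialization, I would bound the cost of the loop body over the $O(n^2)$ pairs $x,y$. Copying $Q_k(x,y)$ into $P^*_k(x,y)$ (Step~\ref{alg2:addFromQk}) costs $O(k)$ per pair, since each set holds at most $k$ paths, giving $O(kn^2)$ overall; testing whether the first $k-1$ paths share a first edge (Step~\ref{alg2:ifinit}) likewise costs $O(k)$ per pair (or $O(1)$ if first edges are cached), again $O(kn^2)$. The remaining work per pair---adding the common edge to a \MarkEdges{} set, forming the $k$-LESiP, and inserting it into $H$---is $O(1)$ plus one heap insertion, and there are at most $O(n^2)$ such insertions, for a total of $O(n^2 \log n)$.

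For the main loop I would argue that the number of iterations equals the number of {\sc Extract-min} operations, which is at most the number of insertions, i.e.\ $O(n^2)$ by Lemma~\ref{lemma:LESiP}; each iteration pays $O(\log n)$ for the extraction and $O(1)$ for the update test in Steps~\ref{alg2:addToP1}--\ref{alg2:addToP2}. The delicate part is bounding the work in Step~\ref{alg2:addToH2}: whenever a set $P^*_k(x,y)$ is updated, the algorithm scans $\MarkEdges(x,y)$ and inserts one left extension per edge. By Lemma~\ref{lemma:update} each $P^*_k(x,y)$ is updated at most once, so each \MarkEdges{} set is scanned at most once across the whole run; hence the total number of edges examined---and of heap insertions performed---in Step~\ref{alg2:addToH2} is at most $\sum_{x,y}|\MarkEdges(x,y)| = O(n^2)$ by Lemma~\ref{lemma:markEdges}, contributing $O(n^2\log n)$ after the $O(\log n)$ per insertion.

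Summing the two phases gives $O(kn^2) + O(n^2\log n) = O(kn^2 + n^2\log n)$, as claimed. I expect the main obstacle to be precisely the Step~\ref{alg2:addToH2} bound: the naive estimate would multiply the number of updated pairs by the worst-case size of a single \MarkEdges{} set (which can be as large as $\Theta(n)$) and overcount badly. The argument must instead combine Lemma~\ref{lemma:update} (each set is updated, hence scanned, at most once) with the \emph{global} count of Lemma~\ref{lemma:markEdges} (total \MarkEdges{} size $O(n^2)$) to obtain the aggregate $O(n^2)$ bound rather than a per-pair worst case.
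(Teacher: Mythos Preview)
Your proposal is correct and follows essentially the same approach as the paper: separate the initialization from the main loop, charge $O(kn^2)$ for copying and inspecting the $Q_k$ sets, and use Lemmas~\ref{lemma:markEdges}--\ref{lemma:LESiP} to bound the total number of heap operations by $O(n^2)$, each costing $O(\log n)$. Your treatment of Step~\ref{alg2:addToH2} is slightly more explicit than the paper's (which simply invokes Lemma~\ref{lemma:markEdges}), but the underlying argument is the same.
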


\vspace{-.2in}
\begin{proof}
A binary heap suffices for $H$.
The initialization for loop in Steps \ref{alg2:for1}-\ref{alg2:endFor1} takes $O(kn^2)$ time to initialize and inspect the $P^*_k$ sets. It is executed at most $n^2$ times and, outside of the 
inspection of $P^*_k(x,y)$ an  iteration costs $\Theta(\log n)$ time (cost for insertion in heap), thus contributing $O(n^2\log n)$ to the running time. The while loop is executed $O(n^2)$ times as by lemma \ref{lemma:LESiP}, $O(n^2)$ elements are added to the heap. The extract-min operation takes $\Theta(\log n)$ time and hence Step \ref{alg2:extractMin} contributes $O(n^2\log n)$ to the running time. Steps \ref{alg2:letPi}-\ref{alg2:addToP2} takes constant time per iteration and hence add $O(n^2)$ to the total running time. By lemma \ref{lemma:markEdges}, Step \ref{alg2:addToH2} is executed $O(n^2)$ times and contributes $O(n^2\log n)$ to the running time. Thus, the total running time of the algorithm is $O(kn^2 + n^2\log n)$.
\end{proof}

\subsection{Computing the $Q_k$ Sets}\label{sec:compute-qk}

\subsubsection{Computing $Q_k$ for $k=2$}	\label{sec:2APSiSP}

We now give an $O(mn + n^2 \log n)$ time algorithm to
compute $Q_{2}(x,y)$  for all pairs $x,y$. We then show that we can also obtain the $Q_2$ sets from a DSO (distance sensitivity oracles, see Introduction), but this algorithm is slightly slower than our first method.  

Our faster method first computes a  shortest path (SP) for each pair using an efficient 
APSP algorithm~\cite{Pettie04}.  
This gives the first path in each $Q_2$ set. To obtain the second path, for each $x,y$ we need to find a
shortest path from $x$ to $y$ that avoids first edge $(x,a)$ on the SP.
We can trivially compute such paths by running Dijkstra on the subgraph $G - \{e\}$ with source $x$ where $e = (x,a)$ is the first edge on the shortest path from $x$ to $y$. With this approach we will make  $m$ calls to Dijkstra's algorithm.
 We now describe a more efficient method that makes only $n$ calls to Dijkstra's algorithm. This method uses
  the procedure {\sc fast-exclude} from Demetrescu et al.~\cite{DTCR08}. We present the
  input-output specifications of {\sc fast-exclude} here; full details of this algorithm can be found 
  in~\cite{DTCR08}. We start with the following
  definition.
 
\begin{definition} [\textit{Independent Edges} {\cite{DTCR08}}]
Given a rooted tree $T$, edges $(u_1,v_1)$ and $(u_2,v_2)$ on $T$
are \emph{independent} if the subtree of $T$ rooted at $v_1$ and the subtree of $T$ rooted at $v_2$ are disjoint.
\end{definition}

Given the weighted directed graph $G=(V,E)$,  the SSSP tree $T_s$ rooted
at a source vertex $s\in V$, and 
a set $S$ of independent edges in $T_s$,
algorithm {\sc fast-exclude} in~\cite{DTCR08} computes, for each edge $e \in S$, a shortest path 
from $s$ to every other vertex in $G - \{e\}$. This algorithm runs in time $O(m + n \log n)$. 

We will compute the second path in each $Q_2 (x,y)$ set, for
a given $x\in V$, by running
{\sc fast-exclude} with $x$ as source, and with the set of  outgoing edges from $x$ in
$T_x$ as the set $S$. 
Clearly,
this set $S$ is independent, and hence algorithm {\sc fast-exclude} will produce
its specified output.
Now consider any vertex $y \neq x$, and let $(x,a)$ be the first edge on the shortest path from $x$ to $y$ in $T_x$. Then, by its specification, {\sc fast-exclude} will
compute a shortest path from $x$ to $y$ that avoids edge $(x,a)$ in its output, which is the second
path needed for $Q_2(x,y)$. This holds for every vertex $y\in V-\{x\}$. Thus 
we have the following:

\begin{lemma}\label{lem:q2}
The sets $Q_2(x,y)$, for all pairs $x,y$, can be computed in $O(mn + n^2 \log n)$ time.
\end{lemma}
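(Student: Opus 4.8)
The plan is to prove Lemma~\ref{lem:q2} by exhibiting an algorithm that computes all $Q_2(x,y)$ sets within the stated time bound. Recall from Definition~\ref{def:qk} that for $k=2$, the set $Q_2(x,y)$ consists of the shortest path from $x$ to $y$, together with (when it exists) a shortest path from $x$ to $y$ whose first edge differs from that of the shortest path. So the task splits cleanly into two parts: compute a shortest path for every pair, and then, for every pair $x,y$, compute a shortest path from $x$ to $y$ avoiding the specific first edge $(x,a)$ used by its shortest path.

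First I would invoke an APSP routine (using the algorithm of~\cite{Pettie04}) to obtain, for each ordered pair $x,y$, a shortest path and in particular its first edge $(x,a)$; this yields the first element of each $Q_2(x,y)$ and also gives, for each source $x$, the shortest path tree $T_x$. This step costs $O(mn + n^2 \log\log n)$, within the claimed bound. The second, more delicate part is to obtain the alternate paths efficiently. The naive approach reruns Dijkstra on $G - \{e\}$ once for each distinct first edge $e$ on some shortest path out of $x$, giving up to $m$ Dijkstra calls overall and hence an $O(m(m + n\log n))$ bound, which is too slow for sparse graphs where $m$ can be much larger than $n$.

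The key idea, which I would carry out next, is to batch all of these single-edge-avoiding computations for a fixed source $x$ into a single call of the {\sc fast-exclude} procedure of~\cite{DTCR08}. Fixing a source $x$, I take $S$ to be the set of outgoing edges of $x$ in its shortest path tree $T_x$. I must first check that $S$ is a valid input, i.e., that the edges of $S$ are pairwise \emph{independent} in $T_x$: this is immediate, since the subtrees of $T_x$ rooted at the distinct children of $x$ are disjoint (each vertex other than $x$ lies in exactly one such subtree). Given this independent set, {\sc fast-exclude} returns, for each edge $e \in S$, a shortest path from $x$ to every vertex in $G - \{e\}$, all in a single run costing $O(m + n\log n)$. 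Then for any target $y$, if $(x,a)$ is the first edge on the shortest $x$-to-$y$ path, this edge lies in $S$, and the output of {\sc fast-exclude} for that edge $e = (x,a)$ supplies a shortest path from $x$ to $y$ avoiding $(x,a)$ — exactly the second element of $Q_2(x,y)$. (If no such path exists, {\sc fast-exclude} reports this, and $Q_2(x,y)$ has only one element, consistent with the definition.)

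Finally I would total the cost: one APSP computation plus $n$ independent invocations of {\sc fast-exclude} (one per source $x$), each costing $O(m + n\log n)$, giving $O(mn + n^2\log n)$ overall, which matches the claimed bound. I expect the only substantive step to be the correctness argument that the single first edge $(x,a)$ relevant to the pair $x,y$ really does appear in the independent set $S = \{$outgoing edges of $x$ in $T_x\}$ and that {\sc fast-exclude}'s per-edge output is precisely a shortest path avoiding that edge; both follow directly from the specification of {\sc fast-exclude} quoted above and from the fact that every shortest path from $x$ begins with an edge of $T_x$. The independence check and the timing are routine once this alignment between the needed avoided-edge and the input set $S$ is established.
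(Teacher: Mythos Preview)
Your proposal is correct and follows essentially the same approach as the paper: compute APSP (yielding the first element of each $Q_2(x,y)$ and the trees $T_x$), then for each source $x$ run {\sc fast-exclude} from~\cite{DTCR08} with $S$ equal to the outgoing edges of $x$ in $T_x$, observing that these edges are independent and that the per-edge output gives exactly the required avoided-first-edge shortest paths. The paper even makes the same remark about the naive $m$-Dijkstra alternative that you contrast with the batched approach.
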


This leads to the following algorithm for 2-APSiSP. Its time bound in Theorem~\ref{thm2}, part $(i)$  follows from Lemmas \ref{lem:q2}, \ref{lemma:computeAPSiSP} and \ref{lemma:runtimeComputeAPSiSP}.

\begin{algorithm}[H]
\scriptsize
\begin{algorithmic}[1]

\For {each $x \in V$}	\label{alg3:for1}
  \State Compute the shortest path in each $Q_2 (x,y)$, $y \in V-\{x\}$, by running Dijkstra's algorithm with source $x$.
	\State Compute the second path in each $Q_2(x,y)$, $y \in V-\{x\}$,  using  {\sc fast-exclude} with source $x$ and $S=\{(x,a)\in T_x\}$  \label{alg3:computeQ2}
\EndFor	\label{alg3:endFor1} 
\State {\sc Compute-APSiSP}($G$, wt, 2,  $\{Q_2(x,y), \forall x,y\}$)	\label{alg3:callComputeAPSiSP}
\end{algorithmic}
\caption{{\sc 2-APSiSP}$(G=(V,E); wt)$}
\end{algorithm}

{\bf Computing the $Q_2$ sets from distance sensitivity oracle.}  Let a DSO $D$ with constant query time be given.
For each $x, y \in V$, let $\pi_{xy}$ be the shortest path from $x$ to $y$. 
The second 
SiSP in $Q_2(x,y)$ is the shortest path from $x$ to $y$ avoiding the first edge on $\pi_{xy}$, so we can compute the second SiSP in $Q_2(x,y)$ by making $O(1)$ queries to $D$. Thus,  $O(n^2)$ queries suffice to compute 
the second SiSP in all $Q_2(x,y)$ sets. 
A DSO with constant query time can be computed by a randomized algorithm in $O(n \log n \cdot (m + n \log n ))$ time,
and deterministically in $O(n \log^2 n \cdot (m + n \log n))$ time~\cite{BK09}. Since {\sc Compute-APSiSP} runs in $O(n^2 \log n)$, this gives a
$\tilde{O}(mn)$ time algorithm for 2-APSiSP.
  It is not clear if we can efficiently compute 2-APSiSP directly from a DSO, without using the $Q_2$ sets
  and {\sc Compute-APSiSP}.

\subsubsection{The Algorithm for $k\geq 3$}\label{sec:k-apsisp}

Our algorithm will use the following types of sets. For each vertex $x \in V$, let $I_x$ be the set of incoming edges to
$x$. Also, for a vertex $x \in V$, and vertices $a, y \in V - \{x\}$, let $P^{*x}_k(a,y)$ be the set of $k$ simple shortest
paths from $a$ to $y$ in $G - I_x$, the graph obtained after removing the incoming edges to $x$.
Recall that we maintain all $P^*$ and $Q$ sets as sorted arrays.

We now present Algorithm {\sc APSiSP}$(G, k)$, which first computes the sets $P^{*x}_{k-1}(a,y)$, for all vertices $a, y \in V$.
Once we have these sets, each $Q_k(x,y)$ can be computed as the set of all  paths in the set $P^{*}_{k-1}(x,y)$, together with a shortest path in $\bigcup_{\{(x,a) \mbox{ outgoing from $x$}\}} \{(x,a)\circ p ~|~ p \in P^{*x}_{k-1}(a,y)\}$ (which
is not present in $P^{*}_{k-1}(x,y)$).

\begin{algorithm}[H]
\scriptsize
\begin{algorithmic}[1]
\If {$k = 2$}
	\State compute $Q_2$ sets using algorithm in Section~\ref{sec:2APSiSP}
\Else
	\For {each $x \in V$}	\label{alg4:for1}
		\State $I_x \leftarrow$ set of incoming edges to $x$
		\State Compute sets $P^{*x}_{k-1}(x,y)$, and $P^{*x}_{k-1}(a,y)$ $\forall a,y \in V$ by calling {\sc APSiSP}$(G-I_x, wt, k-1)$	\label{alg4:computeQ3}
		\For {each $y \in V-\{x\}$}	\label{alg4:for2}
			\State $Q_k(x,y) \leftarrow P^{*x}_{k-1}(x,y)$	\label{alg4:Qkstart}
			\State {\bf for all} $(x,a) \in E$ {\bf do} $count_a \leftarrow$ number of paths in $Q_k(x,y)$ with $(x,a)$ as the first edge
			\State $Q_k(x,y) \leftarrow Q_k(x,y) \cup \{$ a shortest path in $\bigcup_{\{(x,a) \mbox{ outgoing from $x$}\}} (x,a)\circ P^{*x}_{k-1}(a,y)[count_a +1]\}$	\label{alg4:Qkend}
	 	\EndFor
	\EndFor	\label{alg4:endFor1} 
\EndIf
\State {\sc Compute-APSiSP}$(G, wt, k, \{Q_k(x,y)~ \forall x,y\in V\})$	\label{alg4:callComputeAPSiSP}
\end{algorithmic}
\caption{{\sc APSiSP}$(G=(V,E),~wt,~k)$}
\end{algorithm}

To compute the $P^{*x}_{k-1}$ sets, {\sc APSiSP}$(G,wt,k)$ recursively calls {\sc APSiSP}$(G - I_x,wt,k-1)$, for each 
vertex $x \in V$. Once we have computed the $P^{*x}_{k-1}$ sets, the $Q_k(x,y)$ sets are readily computed as 
described in steps \ref{alg4:Qkstart} - \ref{alg4:Qkend}. After the computation of $Q_k(x,y)$ sets, 
{\sc APSiSP}$(G,wt,k)$ calls {\sc Compute-APSiSP}$(G, wt, k, \{Q_k(x,y)~ \forall x,y\in V\})$ to compute the $P^{*}_k$
sets.  This establishes the following 
lemma and part $(ii)$ of Theorem~\ref{thm2}.

\begin{lemma}	\label{lemma:3APSiSP}
Algorithm {\sc APSiSP}
$(G,wt,k)$
 correctly computes the sets $P^{*}_{k}(x,y)$ $\forall x,y \in V$. 
\end{lemma}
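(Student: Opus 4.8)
The plan is to prove this by induction on $k$, reducing correctness of {\sc APSiSP} to the already-established correctness of {\sc Compute-APSiSP}. The key reduction is that Lemma~\ref{lemma:computeAPSiSP} takes correct nearly $k$-SiSP sets $Q_k(x,y)$ as input and returns the correct $P^*_k(x,y)$; hence it suffices to show that the sets assembled in Steps~\ref{alg4:Qkstart}--\ref{alg4:Qkend} are exactly the sets of Definition~\ref{def:qk}. For the base case $k=2$, the $Q_2$ sets are computed correctly by the method of Section~\ref{sec:2APSiSP} (Lemma~\ref{lem:q2}), and Lemma~\ref{lemma:computeAPSiSP} finishes. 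For the inductive step $k\ge 3$, I assume {\sc APSiSP} solves $(k-1)$-APSiSP correctly on every input graph; in particular, for each $x$ the recursive call {\sc APSiSP}$(G-I_x,wt,k-1)$ in Step~\ref{alg4:computeQ3} returns the correct $(k-1)$-SiSP sets $P^{*x}_{k-1}(a,y)$ in $G-I_x$ for all $a,y$.

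The technical engine is a weight-preserving correspondence (a clean, fixed-first-edge version of the idea behind Lemma~\ref{lemma:rightSubpath}): for a fixed edge $(x,a)$, deleting that edge is a bijection between the simple paths from $x$ to $y$ whose first edge is $(x,a)$ and the simple paths from $a$ to $y$ that avoid $x$; since a simple path out of $x$ can never re-enter $x$, the latter are precisely the simple paths from $a$ to $y$ in $G-I_x$. The same observation gives $P^{*x}_{k-1}(x,y)=P^*_{k-1}(x,y)$, so the first $k-1$ entries of $Q_k(x,y)$ (inherited in Step~\ref{alg4:Qkstart}) are correct. Consequently the $i$-th shortest simple path from $x$ to $y$ beginning with $(x,a)$ equals $(x,a)\circ P^{*x}_{k-1}(a,y)[i]$ for $i\le k-1$. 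A short exchange argument then shows that the $count_a$ members of the global $(k-1)$-SiSP set starting with $(x,a)$ are exactly the $count_a$ shortest such paths (otherwise a shorter omitted path would itself belong to the top $k-1$), so they correspond to $P^{*x}_{k-1}(a,y)[1..count_a]$; hence the shortest \emph{new} simple path starting with $(x,a)$ is exactly $(x,a)\circ P^{*x}_{k-1}(a,y)[count_a+1]$, which is precisely the candidate inspected in Step~\ref{alg4:Qkend}.

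With this in hand I would verify by cases that the path added in Step~\ref{alg4:Qkend} matches Definition~\ref{def:qk}. If $k^*<k$ there is no new simple path, the union is empty, and $Q_k(x,y)=P^*_{k-1}(x,y)=P^*_k(x,y)$. If $k^*=k$ but the $k-1$ SiSPs do not all share a first edge, then every $count_a\le k-2$, so each index $count_a+1\le k-1$ is in range, and taking the minimum over all first edges returns the globally shortest new path, namely the $k$-th SiSP, giving $Q_k(x,y)=P^*_k(x,y)$.

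The delicate case, which I expect to be the main obstacle, is $k^*=k$ with all $k-1$ SiSPs sharing the first edge $(x,a)$: here Definition~\ref{def:qk} demands the shortest simple path that does \emph{not} start with $(x,a)$, rather than the true $k$-th SiSP (which may well start with $(x,a)$). The resolution is structural and is exactly why the algorithm stores only $k-1$ paths per set: since $count_a=k-1$, the would-be candidate for the common edge is $(x,a)\circ P^{*x}_{k-1}(a,y)[k]$, but the array $P^{*x}_{k-1}(a,y)$ holds only $k-1$ entries, so this entry does not exist and the common edge contributes no candidate. The minimum in Step~\ref{alg4:Qkend} is therefore taken over edges $(x,a')\neq(x,a)$, yielding precisely the shortest simple path avoiding $(x,a)$, as required. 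Combining the three cases shows the computed $Q_k(x,y)$ equals the set of Definition~\ref{def:qk} for every pair, and one final application of Lemma~\ref{lemma:computeAPSiSP} completes the induction.
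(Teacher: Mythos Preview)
Your proof is correct and follows the same approach as the paper: induction on $k$, with the inductive step arguing that Steps~\ref{alg4:Qkstart}--\ref{alg4:Qkend} produce exactly the $Q_k$ sets of Definition~\ref{def:qk} (via the bijection between simple $x$--$y$ paths with first edge $(x,a)$ and simple $a$--$y$ paths in $G-I_x$), after which Lemma~\ref{lemma:computeAPSiSP} finishes. The paper's own justification is far terser---it simply asserts that the $Q_k$ sets are ``readily computed'' from the recursive output---so your write-up supplies the bijection, the exchange argument, and the three-way case analysis that the paper leaves implicit.
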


{\it Proof of Theorem~\ref{thm2}, part $(iii)$.}
The for loop starting in Step \ref{alg4:for1} is executed $n$ times, and the cost of each iteration
is dominated by the call to Algorithm  {\sc 2-APSiSP} in Step \ref{alg4:computeQ3}, which
 takes $O(mn + n^2\log n)$ time. This contributes $O(mn^2 + n^3\log n)$ to the total running time.
The inner for loop starting in Step \ref{alg4:for2} is executed $n$ times per iteration of the outer for loop, and the 
cost of each iteration is $O(k + d_x)$. Summing over all $x \in V$, this contributes $O(kn^2 + mn)$ to the
total running time.
  Step \ref{alg4:callComputeAPSiSP} runs in $O(n^2\log n)$ time as shown in Lemma \ref{lemma:runtimeComputeAPSiSP}. Thus, the total running time is $O(mn^2+ n^3\log n)$.
\hfill $\square$

{\bf $k$-APSiSP.} The performance of Algorithm {\sc APSiSP}  degrades by a factor
of $n$ with each increase in $k$. Thus, it matches Yen's algorithm (applied to all-pairs) for $k=4$, and for larger values of
$k$ its performance is worse than Yen. 

Since finding the $P^*_k$ sets is at least as hard as finding the $Q_k$ sets (as long as the running
time is $\Omega (k \cdot n^2 + n^2 \log n)$), it is possible that the for loop starting in Step~\ref{alg4:for1}
could be replaced by a faster algorithm for finding the $Q_k$ sets, which in turn would lead
to a faster algorithm for $k$-APSiSP.

\subsection{Generating $k$ Simple Shortest Cycles}\label{sec:k-cycles}

\noindent
{\bf $k$-SiSC.}
This is
the problem of generating the $k$ simple shortest cycles through a specific vertex $z$ in $G$
($k$-SiSC). We can reduce this problem to $k$-SiSP by forming $G'_z$, where we replace vertex $z$
by vertices $z_i$ and $z_o$ in $G'_z$,  and we replace each incoming edge to (outgoing edge from) $z$ with an incoming edge to $z_i$ (outgoing edge
from $z_o$) in $G'_z$. It is not difficult to see that the $k$-th simple shortest path from $z_o$ to $z_i$ in $G'_z$ corresponds to the $k$-th simple
shortest cycle through $z$ in $G$.

\vh
\noindent
{\bf $k$-AVSiSC.}
This is
 the problem of generating $k$ simple shortest cycles that pass through a given vertex $x$, 
{\it for every vertex}  $x\in V$.
For $k=2$, we can reduce this problem to $k$-APSiSP by forming the graph $G'$ where for each vertex $x$,  we replace vertex $x$ in $G$ by vertices $x_i$ and $x_o$ in $G'$, we place a directed
edge of weight 0 from $x_i$ to $x_o$, and we replace each edge $(u,x)$ in $G$ by an edge $(u_o,x_i)$ in $G'$ (and hence we also replace
each edge $(x,v)$ in $G$ by an edge $(x_o, v_i)$ in $G'$). For $k>2$ a faster algorithm would repeat
$k$-SiSC for each vertex.
This leads to the following theorem.

\begin {theorem}
Let $G$ be a directed graph with non-negative edge weights.  Then, \\
(i) $k$-SiSC can be computed in $O(k \cdot (mn + n^2 \log\log n))$ time, the same time as $k$-SiSP.\\
(ii) $2$-AVSiSC can be computed in $O(mn + n^2 \log n)$ time, and for $k>2$, $k$-AVSiSC can be 
computed in $O(k \cdot n \cdot (mn +n^2 \log\log n))$ time.
\end{theorem}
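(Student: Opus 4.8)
The plan is to prove all three bounds by verifying that the two reductions sketched above set up exact, weight- and order-preserving bijections, and then plugging in the known running times: the $O(k(mn + n^2\log\log n))$ bound for $k$-SiSP from \cite{GL09}, and the 2-APSiSP bound of $O(mn + n^2\log n)$ from Theorem~\ref{thm2}(i). The only real content is checking that the graph transformations match the cycle objects we want with the path objects these algorithms produce.

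For part (i), I would first pin down the structure of $G'_z$: every path out of $z_o$ must use an edge $(z_o,w)$ arising from an edge $(z,w)$ of $G$, every path into $z_i$ must use an edge arising from an edge $(u,z)$ of $G$, and the split affects no other vertex. Consequently a walk $z_o \rightsquigarrow z_i$ in $G'_z$ is in one-to-one correspondence with a walk $z \rightsquigarrow z$ in $G$ of exactly the same weight, since splitting $z$ contributes no edge to the path. I would then observe that this correspondence restricts to a bijection between \emph{simple} paths from $z_o$ to $z_i$ and \emph{simple} cycles through $z$: a repeated vertex $v\neq z$ on one side forces a repeated vertex on the other, and $z$ itself appears only as the two distinct endpoints $z_o,z_i$. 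Because the bijection preserves weight it preserves the nondecreasing-by-weight ordering, so the $k$-th simple shortest path from $z_o$ to $z_i$ is precisely the $k$-th simple shortest cycle through $z$. As $G'_z$ has $n+1$ vertices and $m$ edges, running the $k$-SiSP algorithm of \cite{GL09} gives the claimed $O(k(mn + n^2\log\log n))$ bound.

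For part (ii) with $k=2$, the key is the forced transit structure of $G'$. Each vertex $x$ of $G$ becomes $x_i \to x_o$ via the weight-$0$ edge; the only outgoing edge of $x_i$ is this transit edge, and the edges of $G$ appear only as $x_o \to v_i$ for $(x,v)\in E$. Hence any path leaving $x_o$ is forced into the alternating form $x_o \to v_i \to v_o \to w_i \to \cdots$, and a path reaching $x_i$ closes it off. I would check that every simple $x_o$-$x_i$ path therefore has this transit form and maps, weight-preservingly, to a simple cycle through $x$ in $G$, with transit edges contributing nothing to the weight and $x$ visited exactly once (the edge $x_i\to x_o$ is never used by such a path, since it is not an endpoint-incident outgoing edge of a path ending at $x_i$). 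This yields, for each $x$ simultaneously, a weight- and order-preserving bijection between the $2$ simple shortest cycles through $x$ and the $2$ simple shortest paths from $x_o$ to $x_i$. Running 2-APSiSP (Theorem~\ref{thm2}(i)) on $G'$, which has $2n$ vertices and $m+n$ edges, computes all pairs at once in $O((m+n)\cdot n + n^2\log n) = O(mn + n^2\log n)$ time, and we read off the answer for every vertex from the pairs $(x_o,x_i)$. For $k>2$ I would simply invoke part (i) once per vertex, giving $n\cdot O(k(mn+n^2\log\log n)) = O(kn(mn+n^2\log\log n))$.

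The main obstacle is the simplicity and ordering correspondence rather than the arithmetic. I expect the delicate points to be (a) arguing that the vertex splitting forces exactly the transit form on every $x_o$-$x_i$ path, so that no spurious paths are introduced, and (b) confirming that the weight-$0$ transit edges neither create extra equal-weight paths that disturb the ordering nor let a path pass through $x$ in the middle (which would make the image non-simple as a cycle). Both follow from the single-out-edge property of the $x_i$ vertices together with the fact that a simple path visits $x_i$ and $x_o$ at most once each, but these are the steps that require care to make the bijection exact.
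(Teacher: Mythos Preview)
Your proposal is correct and follows essentially the same approach as the paper: the reduction of $k$-SiSC to $k$-SiSP via the single-vertex split $z\mapsto (z_i,z_o)$, and the reduction of $2$-AVSiSC to $2$-APSiSP via the global split with weight-$0$ transit edges $x_i\to x_o$, followed by an invocation of the Gotthilf--Lewenstein bound and Theorem~\ref{thm2}(i) respectively, with the $k>2$ case of AVSiSC handled by $n$ calls to $k$-SiSC. Your write-up supplies more detail than the paper on why the bijections are simplicity- and order-preserving (in particular the forced alternating form and the non-use of the $x_i\to x_o$ edge on an $x_o$--$x_i$ path), but the underlying constructions and time-bound computations are identical.
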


\section{Enumerating Simple Shortest Paths and Cycles in a Graph}\label{sec:allsisp}
 
 In this section we consider the problem of successively generating simple paths and cycles in nondecreasing order of their weights in a directed
 $n$-node, $m$-edge graph $G=(V,E)$ with nonnegative edge weights. 
 In Section~\ref{sec:allsisp1} we give a method to generate each successive simple shortest cycle ($k$-All-SiSC) in $\tilde{O}(m \cdot n)$ time. For enumerating simple paths in nondecreasing order of weight ($k$-All-SiSP), we give a faster method
 in Section~\ref{sec:allsisp2} that  uses again a path extension method, different from the one used in Section~\ref{sec:compute-pstar}. On the other hand, in 
 Section~\ref{sec:hardness} we show 
 that the problem of generating the $k$-th simple shortest cycle in a graph after the first $k-1$ cycles have been generated is at least as hard as the  Min-Wt-Cyc problem.
 
\subsection{Generating Successive Simple Shortest Cycles}\label{sec:allsisp1}

We assume the vertices are numbered 1 through $n$. 
Our algorithm for 
$k$-All-SiSC 
 maintains an array $A[1..n]$, where each $A[j]$  contains a triple $(ptr_j, w_j, k_j)$; here
$ptr_j$ is a pointer to the shortest  cycle, not yet generated, that contains  $j$ as the minimum vertex  (if such a cycle exists),
$w_j$ is the weight of this cycle, and $k_j$ is the number of  
shortest simple cycles through vertex $j$ that have already been generated.
(Note that if a cycle $C$ is pointed to by an entry  in $A[r]$, then the minimum vertex on $C$ must be labelled $r$; thus any given cycle is assigned to exactly one position in array $A$.)

We will work with the graph $G'$ described in Section~\ref{sec:k-cycles}. Initially, 
we compute the entry for  each $A[j]$ by running Dijkstra's algorithm with source $j_o$ on the subgraph $G'_j$ of $G'$  induced on 
$V'_j = \{x_i, x_o ~|~ x \geq j\}$, to find a shortest path $p$ from $j_o$ to $j_i$; we then initialize $A[j]$ with a pointer to the cycle in $G$ associated with $p$, and with its weight, and with $k_j=0$.

For each $k\geq 1$, we generate the $k$-th simple shortest cycle in $G$ 
by choosing  a minimum weight cycle in array $A$. Let this entry be in $A[r]$.
We then compute the last path in $(k_r +1)$-SiSC  through vertex $r$ using the algorithm in Section~\ref{sec:k-cycles},
and we update the entry  in $A[r]$ with this cycle.

Correctness of this algorithm is immediate since the $k$-th simple shortest cycle must be pointed to by some entry in array $A$ after $k-1$ iterations.
The initialization takes $O(mn + n^2 \log n)$ for the $n$ calls to Dijkstra's algorithm. Thereafter,
the algorithm in Section~\ref{sec:k-cycles} generates each new cycle in the slightly faster APSP time bound
of $O(mn + n^2 \log\log n)$,  by maintaining the relevant information generated during the computation  of earlier cycles, as in~\cite{Yen71,GL09}. This establishes the correctness of Theorem~\ref{thm3}, part $(i)$.

A similar algorithm can generate successive simple shortest paths. But in the next section, we present a faster algorithm for this
problem. For constant $k$, this algorithm generates a succinct representation of the $k$-th simple shortest path in $O(\log n)$
time, after an initial start-up cost of $O(m)$ to generate a shortest simple path in the graph (which is an edge of minimum weight).

 \subsection{A Faster Algorithm to Generate Successive Simple Shortest Paths}\label{sec:allsisp2}
 
 Since all vertices on a simple path must be distinct, an $n$ node graph has $O(n^n)$ simple paths.
  Our algorithm for $k$-All-SiSP
   is inspired by  the method in~\cite{DI04} for fully dynamic APSP.

  With each path $\pi$, we will associate two sets of paths $L(\pi)$ and  $R(\pi)$ as described below. Similar sets
  are used in~\cite{DI04} for `locally shortest paths' but here they have a different use
 as described below.
   
   {\it Left and right extensions.} 
   Let $\mathcal P$ be a collection of simple paths.
   For a simple path $\pi_{xy}$ from $x$ to $y$ in $\mathcal P$,
    its left extension set $L(\pi_{xy})$
  is the set of simple  paths $\pi' \in \mathcal P$ such that $\pi' = (x', x) \circ \pi_{xy}$, for some $x' \in V$.
  Similarly, the right extension set $R(\pi_{xy})$ is the set of simple paths $\pi''= \pi_{xy} \circ (y,y')$ such that $\pi'' \in \mathcal P$. For a trivial path $\pi= \langle v \rangle$, 
  $L(\pi)$ is the set of incoming edges to $v$, and $R(\pi)$ is the set of outgoing edges from $v$.
  
  Algorithm {\sc All-SiSP}, given below,  generates all simple shortest paths in $G$ in nondecreasing order of
  weight. 
   To generate the $k$ shortest simple paths in $G$, we can  terminate
  the while loop after $k$ iterations.
   Algorithm {\sc All-SiSP}  initializes a priority queue $H$ with the edges in $G$, and it initializes the extension sets for
  the vertices in $G$. In each iteration of the main loop, the algorithm extracts the minimum weight path $\pi$ in $H$ as the next simple path in the output sequence. It then generates suitable extensions of $\pi$ to be added to $H$ as follows. Let the first edge
  on $\pi$ be $(x,a)$ and the last edge $(b,y)$. Then,  {\sc All-SiSP} left extends $\pi$
along those edges $(x',x)$ such that there is a path $\pi_{x'b}$ in $L(l(\pi))$; it also requires that
 $x' \neq y$, since extending to $x'$ would create a cycle in the path. Algorithm {\sc All-SiSP}  forms
similar extensions to the right in the for loop starting at Step~\ref{algoall:rightext}.

 \begin{algorithm}[H]
 \scriptsize
\begin{algorithmic}[1]
\State \underline{Initialization:}	\label{algall:init}
\State $H \leftarrow \phi$ $~~~$ \{$H$ is a priority queue.\}
\ForAll {$(x,y) \in E$} 
\State Add $(x,y)$ to priority queue $H$ with $wt(x,y)$ as key\label{algoall:init}
\State Add $(x,y)$ to $L(\langle y\rangle)$ and $R(\langle x\rangle)$
\EndFor
\State \underline{Main loop:}
\While{$H \neq \phi$}
\State $\pi \leftarrow$ {\sc Extract-min}$(H)$ \label{algoall:extract}
\State Add $\pi$ to the output sequence of simple paths
\State Let   $\pi_{xb}=\ell(\pi)$ and $\pi_{ay}= r(\pi)$ (so $(x,a)$ and $(b,y)$ are the first and last edges on $\pi$)
\ForAll {$\pi_{x'b} \in L(\pi_{xb})$ with $x' \neq y$} \label{algoall:forall1}
\State Form $\pi_{x'y} \leftarrow (x',x) \circ \pi$ and add $\pi_{x'y}$ to $H$ with $wt(\pi_{x'y})$ as key \label{algoall:leftext}
\State Add  $\pi_{x'y}$ to $L(\pi_{xy})$ and to $R(\pi_{x'b})$ \label{algoall:leftextsets}
\EndFor
\State {\bf for all} $\pi_{ay'} \in R(\pi_{ay})$ with $y' \neq x$ {\bf do} perform steps complementary to Steps \ref{algoall:leftext} and \ref{algoall:leftextsets} \label{algoall:rightext}
\EndWhile
\end{algorithmic}
\caption{{\sc All-SiSP}$(G=(V,E); wt)$}	\label{alg:all}
\end{algorithm} 

We now establish that Algorithm {\sc All-SiSP} generates only simple paths, and that it generates every
simple path in $G$ in nondecreasing order of weight. 

 \begin{lemma}	\label{lem:nocycle}
 Every path generated by Algorithm {\sc All-SiSP}
 is a simple path.
 \end{lemma}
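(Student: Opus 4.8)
The plan is to prove that every path placed on the heap $H$, and hence every path extracted and output, is simple. Since the output paths are exactly the paths extracted from $H$, it suffices to show the invariant that $H$ only ever contains simple paths. I would prove this by induction on the order in which paths are added to $H$, tracking how a path gets built up by left and right extensions.

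First I would set up the base case: the only paths added to $H$ during initialization are the single edges $(x,y)\in E$, which are trivially simple. For the inductive step, consider a path $\pi_{x'y}$ added to $H$ in Step~\ref{algoall:leftext} (the right-extension case in Step~\ref{algoall:rightext} is symmetric). This path is formed as $\pi_{x'y}=(x',x)\circ\pi$, where $\pi=\pi_{xy}$ was just extracted from $H$ and, by the induction hypothesis, is simple. So $\pi_{x'y}$ can fail to be simple only if the newly prepended vertex $x'$ already appears on $\pi$. The key observation is that the guard in the \textbf{forall} loop (Step~\ref{algoall:forall1}) explicitly requires $x'\neq y$, which rules out $x'$ coinciding with the \emph{last} vertex of $\pi$. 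The crux of the argument is therefore to show that $x'$ cannot coincide with any of the \emph{internal} vertices of $\pi$ either.

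The main obstacle, and the heart of the proof, is exactly this last point: ruling out collisions with internal vertices. I expect to handle it using the definition of the left-extension set $L(\pi_{xb})$ together with the fact that $x'$ is drawn from it. The edge $(x',x)$ is added to $L(\langle x\rangle)$ only during initialization, and longer members of $L(\pi_{xb})$ are themselves paths $\pi_{x'b}\in\mathcal{P}$ that were previously added to $H$; by induction these are simple paths from $x'$ to $b$. I would argue that $\pi_{x'b}$ and the simple path $\pi=\pi_{xy}$ share the common subpath $\pi_{xb}$ (i.e. $\pi_{x'b}=(x',x)\circ\pi_{xb}$ and $\pi=\pi_{xb}\circ(b,y)$, where $\pi_{xb}=\ell(\pi)$), so that the vertex $x'$ together with the vertices of $\pi_{xb}$ already form the simple path $\pi_{x'b}$; in particular $x'$ is distinct from every vertex of $\pi_{xb}$. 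The only vertex of $\pi$ not on $\pi_{xb}$ is the final vertex $y$, which is excluded by the loop guard $x'\neq y$. Combining these, $x'$ differs from every vertex of $\pi$, so $\pi_{x'y}$ is simple.

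To make this rigorous I would first verify the structural claim that membership $\pi_{x'b}\in L(\pi_{xb})$ forces $\pi_{x'b}$ to be the concatenation $(x',x)\circ\pi_{xb}$ with $\pi_{xb}$ simple — this follows from how $L$-sets are populated in Steps~\ref{algoall:init} and \ref{algoall:leftextsets} and can itself be folded into the same induction. The symmetric treatment for right extensions uses $R(\pi_{ay})$ and the guard $y'\neq x$ in Step~\ref{algoall:rightext} in the analogous way. Since every path ever added to $H$ is of one of these three forms (an initial edge, a left extension, or a right extension), the induction is complete and every extracted path is simple.
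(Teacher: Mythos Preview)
Your proposal is correct and follows essentially the same approach as the paper: both arguments hinge on the observation that the newly formed path $(x',x)\circ\pi$ can be rewritten as $\pi_{x'b}\circ(b,y)$ where $\pi_{x'b}\in L(\ell(\pi))$ was itself previously placed on $H$ and is therefore simple, so the only possible repeated vertex is $x'=y$, which the loop guard excludes. The paper packages this as a minimal-counterexample (first non-simple path extracted) argument rather than your forward induction on insertion order, but the substance is identical.
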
 
  \vspace{-0.2in}

  \begin{proof}
  Since edge weights are nonnegative,
 the first path generated by Algorithm~\ref{alg:all} is a minimum weight edge inserted in Step~\ref{algoall:init}, which is
  a simple path. Assume the algorithm generates a path with a cycle, and let $\sigma$ be the first path extracted in 
  Step~\ref{algoall:extract} that contains a cycle.  Let $(x',a)$ and $(b,y)$ be
  the first and last edges on $\sigma$. Since $\sigma$ contains a cycle, it contains at least two edges so $(x',a)$ and
  $(b,y)$ are distinct edges.
  
  Consider the step when the non-simple path
   $\sigma$ is placed on $H$. This does not occur in Step~\ref{algoall:init} since $\sigma$ contains at
  least two edges. So $\sigma$ is placed on $H$ in some iteration of the while loop. Let $\pi$ be the path extracted from
  $H$ in this iteration; $\pi$ is a simple path by assumption since it was extracted from $H$ before $\sigma$.
   Then $\sigma$ is added to $H$ either as a left extension of $\pi$ (in Step~\ref{algoall:leftext}) or
  as a right extension of $\pi$ in a step complementary to Step~\ref{algoall:leftext} in the for loop in 
  Step~\ref{algoall:rightext}. 
  
  Consider the left extension case, and let $\sigma$ be formed when processing
  path $\pi_{x'b} \in L(l(\pi))$ with $x' \neq y$ in Step ~\ref{algoall:forall1}. 
   Thus $\sigma$ is formed as $(x',x) \circ \pi$ in Step~\ref{algoall:leftext}. 
   But  $(x',x) \circ \pi = (x',x) \circ \ell(\pi) \circ (b,y) = \pi_{x'b} \circ (b,y)$.
Since $\pi_{x'b} \in L(l(\pi))$, it was also placed in $H$ in either Step~\ref{algoall:init} or Step~\ref{algoall:leftext}.
And as $wt(\pi_{x'b}) < wt(\sigma)$, the path  $\pi_{x'b}$ is simple.
   Since 
  $\pi_{x'b}$ is simple, a cycle can be formed in $\sigma$ only if $x' = y$. But this is specifically forbidden in
 the condition in Step~\ref{algoall:forall1}. A similar argument applies to right extensions added to $H$ in 
 Step~\ref{algoall:rightext}. Hence $\sigma$ is a simple path, and Algorithm~\ref{alg:all} does not
 generate any path containing a cycle.   
 \end{proof}

 \begin{lemma}	\label{lemma:all}
  Algorithm {\sc All-SiSP} generates all simple paths in $G$ in nondecreasing order of their weights.
  \end{lemma}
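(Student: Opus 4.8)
The plan is to prove two statements separately: that the output sequence is nondecreasing in weight, and that every simple path in $G$ is eventually extracted. The first is immediate from the priority-queue discipline. When a path $\pi$ is extracted with weight $wt(\pi)$, every path placed on $H$ while processing $\pi$ is an extension of $\pi$ by a single edge, so by nonnegativity of edge weights its weight is at least $wt(\pi)$; since $\pi$ was a minimum-weight element of $H$, after the iteration every element of $H$ has weight at least $wt(\pi)$, and hence the weights returned by successive \textsc{Extract-min} calls form a nondecreasing sequence (the standard Dijkstra/Eppstein-style argument). By Lemma~\ref{lem:nocycle} every extracted path is simple, so it remains to establish completeness.

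For completeness I would argue by strong induction on the number of edges of a simple path $\sigma$. The base case is a single edge, which is inserted into $H$ during initialization and hence extracted. For the inductive step, write the first and last edges of $\sigma$ as $(u,u')$ and $(v',v)$, and set $\sigma^- = r(\ell(\sigma)) = \ell(r(\sigma))$, the path obtained by deleting both the first and last edges of $\sigma$. The crux is the following insertion-time invariant, which I would isolate and prove first: whenever a simple path $\tau$ with at least one edge is placed on $H$ (in initialization or as an extension), it is simultaneously inserted into $L(r(\tau))$ and into $R(\ell(\tau))$. For a single edge this is exactly what the initialization loop does; for $\tau$ created as a left extension $\tau=(x',x)\circ\pi$ one checks that $r(\tau)=\pi$, so Step~\ref{algoall:leftextsets} places $\tau$ in $L(r(\tau))$, and a symmetric check handles the right-extension case.

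Granting this invariant, the subpaths $\ell(\sigma)$ and $r(\sigma)$ each have one fewer edge and are simple, so by the induction hypothesis both are extracted from $H$; moreover each was placed on $H$ before being extracted, so by the invariant $\ell(\sigma)\in L(\sigma^-)$ and $r(\sigma)\in R(\sigma^-)$ from the moment each is inserted. Consider whichever of $\ell(\sigma),r(\sigma)$ is extracted later, say $r(\sigma)$ (the other case is symmetric). At that extraction the path $\ell(\sigma)$ has already been inserted, hence $\ell(\sigma)$ is present in $L(\ell(r(\sigma)))=L(\sigma^-)$; since $\ell(\sigma)$ runs from $u$ to $v'$ and $u\neq v$ (as $\sigma$ is simple), the for-loop in Step~\ref{algoall:forall1} forms $(u,u')\circ r(\sigma)=\sigma$ and inserts it into $H$. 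Because the while loop runs until $H$ is empty and only finitely many paths are ever inserted, $\sigma$ is eventually extracted, which completes the induction and hence the lemma.

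I expect the main obstacle to be the timing argument of the third paragraph: one must verify that the required membership $\ell(\sigma)\in L(\sigma^-)$ (or $r(\sigma)\in R(\sigma^-)$) is already in place at the precise iteration that would create $\sigma$. This is exactly why the argument is phrased in terms of the later-extracted of the two subpaths, and why the insertion-time invariant, rather than an extraction-time statement, is the right lemma to isolate. A secondary point worth checking carefully is that the two admissible decompositions of a path (as a left extension of its right subpath and as a right extension of its left subpath) both deposit the relevant subpath into the same set $L(\sigma^-)$ or $R(\sigma^-)$, so that the invariant is genuinely independent of how the path was produced.
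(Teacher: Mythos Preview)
Your proof is correct and follows essentially the same argument as the paper's: both establish nondecreasing order from the priority-queue discipline and nonnegative weights, invoke Lemma~\ref{lem:nocycle} for simplicity, and for completeness show that once $\ell(\sigma)$ and $r(\sigma)$ have been handled, the later-processed of the two finds the other in the appropriate $L$ (or $R$) set of $\sigma^-$ and forms $\sigma$. The paper phrases completeness as a minimal-weight counterexample rather than induction on edge count, but the mechanics are identical; your explicit ``insertion-time invariant'' is exactly the paper's observation that $\pi_{xb}$ is placed in $L(\pi_{ab})$ at the moment it enters $H$, and your choice to induct on edge count rather than weight is a minor but clean refinement that sidesteps the need for strict weight inequalities between a path and its subpaths.
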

  \vspace{-0.2in}
 \begin{proof}
  Clearly the algorithm correctly generates the minimum weight edge in $G$ as the minimum weight simple path
  in the output in the first iteration of the while loop. By Lemma~\ref{lem:nocycle} all generated paths are simple.
  Also, these simple paths are generated in nondecreasing order of weight since any path added to $H$ in Steps
  ~\ref{algoall:leftext} and \ref{algoall:rightext} has weight at least as large as the weights
   of the paths that have been extracted at that time, due
  to nonnegative  edge-weights. It remains to show that no simple path in $G$ is omitted in the sequence of simple paths
  generated.
  
  Suppose the algorithm fails to generate all simple shortest paths in $G$ and let $\pi$ be a simple path of smallest weight
  that is not generated by Algorithm~\ref{alg:all}. Let $\pi$ be a path with first edge $(x,a)$ and last edge $(b,y)$;
  $(x,a) \neq (b,y)$ since all single edge paths is added to $H$ in Step~\ref{algoall:init}, and will be extracted in a future
  iteration. Let $\pi_{ab}$ be the
  subpath of $\pi$ from $a$ to $b$.
  By assumption, the paths 
  $\pi_{xb} = \ell(\pi)$ and $\pi_{ay}=r(\pi)$ are placed in the output by Algorithm~\ref{alg:all} since they are simple paths with weight smaller than
  the weight of $\pi$. Without loss of generality assume that $\pi_{xb}$ was extracted from $H$ before $\pi_{ay}$. 
  
  Clearly, $\pi_{xb}$ was inserted in $H$  before $\pi_{ay}$ was extracted.
  In the
  iteration of the while loop when $\pi_{xb}$ was added to $H$, $\pi_{xb}$ was added to 
  $L(\pi_{ab})$ in Step~\ref{algoall:leftextsets} since $r(\pi_{xb}) = \pi_{ab}$. In the later iteration when $\pi_{ay}$ was extracted from $H$,
 the paths in $L(\ell(\pi_{ay}))$ are considered in Step~\ref{algoall:leftext}. But 
  $\ell(\pi_{ay}) = \pi_{ab}$. When the paths in $L(\ell(\pi_{ay})) =L( \pi_{ab})$ are considered in Step~\ref{algoall:forall1} 
  during the processing 
  of $\pi_{ay}$, the path
  $\pi_{xb}$ will be one of the paths processed, and in Step~\ref{algoall:leftext} the path $(x,a) \circ \pi_{ay} = \pi$ 
  will be formed
  and added to $H$. Thus $\pi$ will be added to $H$, and hence will be extracted and added to the output sequence.
  \end{proof}

We can now prove  Theorem~\ref{thm3}.

{\it Proof of Theorem~\ref{thm3}, part $(ii)$.}
We will maintain paths with pointers to their left and right subpaths, so each path takes $O(1)$ space.
For the amortized bound we will implement $H$ as a Fibonacci heap.
The initialization takes $O(m)$ time. 
Each $L$ and $R$ set can contain at most $n-2$ paths, and further, since  extensions are
formed only with paths already in $H$, each of these sets has size $\min \{k, n-2\}$.
 The $k$-th iteration  of the while loop
takes time $O(\log |H|)$ for the extract-min operation, and $O(\min \{k, n\})$ time for the processing of the $L$ and $R$ sets.
At the start of the $k$-th iteration, the number of paths in $H$ is at most $O(m + k \cdot \min \{k, n\})$, and since $m = O(n^2)$,
$\log |H| = O(\log (n +k))$. 
Hence the amortized time  for the $k$-th iteration is  $O(\min\{k, n\} +  \log (n+k) )$.

For the worst-case bound we will use a binary heap. Then, the initialization takes $O(m)$ time to build a heap on the $m$ edges,
and the $k$-th iteration  costs
$O(\min \{k,n\} \cdot \log (n+k))$ for the heap operations.
\hfill $\square$

\section{Hardness Results}\label{sec:hardness}

We start with the definition of an {\it $f(m,n)$ reduction}.

\begin{definition}\label{def:mn-red}
Given graph problems $P$ and $Q$, an \emph{$f(m,n)$ reduction}, $P \leq_{f(m,n)} Q$, means that an input 
$G=(V,E)$ to $P$ with $|V|=n$,
$|E|=m$ can be reduced in $O(f(m,n))$ time to an input $G'=(V',E')$ to $Q$ 
such that from a solution for $Q$ on $G'$ we can obtain a solution for $P$ on $G$ in $O(f(m,n))$ time.
\end{definition}

The following lemma is straightforward.

\begin{lemma}
If $P \leq_{f(m,n)} Q$ then for any $f'(m,n) = \Omega (f(m,n))$, an $f'(m,n)$ algorithm for $Q$ implies an $f'(m,n)$ algorithm for $P$.
\end{lemma}

We mainly consider $f(m,n) =O(m+n)$, 
except for one reduction with $f(m,n) = (m+n) \cdot \log n$.

We now give some  $(m+n)$ reductions from Min-Wt-Cyc to several versions of the SiSP and SiSC problems.
Recall that Min-Wt-Cyc is the problem of finding a minimum weight cycle in a directed graph with
non-negative edge weights.

\begin{lemma}	\label{lemma:kSiSC-SiSP}
$k$-SiSC $\equiv_{(m+n)}$ $k$-SiSP.
\end{lemma}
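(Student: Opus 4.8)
The plan is to prove the claimed equivalence by exhibiting $(m+n)$ reductions in both directions. The reduction $k$-SiSC $\leq_{(m+n)}$ $k$-SiSP is exactly the construction already given at the start of Section~\ref{sec:k-cycles}: for a $k$-SiSC instance $(G, z)$, split $z$ into $z_i$ and $z_o$, reroute each incoming edge of $z$ to $z_i$ and each outgoing edge from $z$ to $z_o$, obtaining $G'_z$ in $O(m+n)$ time. Simple cycles through $z$ in $G$ are then in weight-preserving bijection with simple paths from $z_o$ to $z_i$ in $G'_z$, so the $k$-th SiSC through $z$ is recovered from the $k$-th $z_o$-$z_i$ SiSP in $O(m+n)$ time. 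It therefore remains to supply the reverse reduction $k$-SiSP $\leq_{(m+n)}$ $k$-SiSC.

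For the reverse direction I would take a $k$-SiSP instance $(G=(V,E), s, t)$ with $s \neq t$ and build a $k$-SiSC instance as follows. Introduce a single new vertex $z$ together with two new edges $(t,z)$ and $(z,s)$, each of weight $0$, obtaining $G'$ with $n+1$ vertices and $m+2$ edges; this takes $O(m+n)$ time, and the distinguished vertex for the $k$-SiSC instance is $z$. Since $z$ is the only new vertex and has exactly one incoming edge $(t,z)$ and one outgoing edge $(z,s)$, every simple cycle through $z$ in $G'$ must use both of these edges, and hence has the form $z \to s \rightsquigarrow t \to z$, where $s \rightsquigarrow t$ is a path in $G$ that avoids $z$.

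The key step is to argue that this gives a weight-preserving bijection between simple $s$-$t$ paths in $G$ and simple cycles through $z$ in $G'$. A simple $s$-$t$ path $p$ in $G$ yields the cycle $(z,s)\circ p \circ (t,z)$, which is simple (the only added vertex $z$ appears once and the vertices of $p$ are distinct) and has the same weight as $p$ since the two added edges have weight $0$; conversely, deleting $z$ from any simple cycle through $z$ returns a simple $s$-$t$ path in $G$ of equal weight. Because the map adds the same fixed offset to every path, it preserves the nondecreasing weight order, so the $i$-th simple shortest cycle through $z$ in $G'$ corresponds to the $i$-th simple shortest path from $s$ to $t$ in $G$ for every $i$, and reading off the $k$-SiSP answer from the $k$-SiSC answer is immediate in $O(m+n)$ time.

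The main obstacle, though minor, is verifying the "only-if" half of the bijection carefully: one must confirm that no simple cycle through $z$ can bypass $s$ or $t$, which holds precisely because $z$ has in-degree and out-degree one in $G'$, forcing any such cycle to leave $z$ at $s$ and return to $z$ from $t$. A secondary point is the treatment of the zero-weight edges: if strictly positive weights are preferred (cf.\ the footnote in Section~\ref{sec:intro} on the $(wt(p), len(p))$ convention), the two added edges contribute the same constant offset to every cycle and so do not disturb the ranking. Combining the two reductions yields $k$-SiSC $\equiv_{(m+n)}$ $k$-SiSP.
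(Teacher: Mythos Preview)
Your proposal is correct and follows essentially the same approach as the paper: the paper's proof gives precisely the $G'_z$ splitting for $k$-SiSC $\leq_{(m+n)}$ $k$-SiSP and, for the reverse direction, adds a single new vertex $z$ with weight-$0$ edges $(y,z)$ and $(z,x)$ so that simple cycles through $z$ correspond to simple $x$-$y$ paths. Your write-up supplies more detail on the bijection (the in/out-degree-one argument for $z$) than the paper does, but the construction and reasoning are the same.
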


\begin{proof}
The reduction from $k$-SiSC to $k$-SiSP is the same as that used in the algorithm for
$k$-SiSC in Section~\ref{sec:k-cycles}; we include it again here for completeness.
Suppose we are given an instance of the $k$-SiSC problem, a directed graph $G = (V,E)$ where for some $x \in V$, we want to find $k$-SiSCs passing through vertex $x$.
We can reduce this problem to $k$-SiSP by forming the graph $G'$ where,  we replace vertex $x$ in $G$ by vertices $x_i$ and $x_o$ in $G'$,  and we replace each edge $(u,x)$ in $G$ by an edge $(u_o,x_i)$ in $G'$ (and hence we also replace
each edge $(x,v)$ in $G$ by an edge $(x_o, v_i)$ in $G'$). It is not difficult to see that the $k$-th simple shortest path from $x_o$ to $x_i$ in $G'$ corresponds to the $k$-th simple
shortest cycle through $x$ in $G$. 

As the number of vertices and edges in $G'$ are linear in the number of vertices and edges, respectively, in $G$, we deduce that $k$-SiSC $\leq_{(m+n)}$ $k$-SiSP.

Now suppose that we are given an instance of the $k$-SiSP problem, a directed graph $G = (V,E)$ where for some $x,y \in V$, we want to find $k$-SiSPs from $x$ to $y$.
We can reduce this problem to $k$-SiSC by forming the graph $G'$ where, we add a new vertex $z$ and we place a directed edge of weight 0 from $y$ to $z$ and from $z$ to $x$. Now we can 
readily
see that the $k$-th simple shortest cycle through $z$ in $G'$ corresponds to the $k$-th simple shortest path from $x$ to $y$ in $G$. Hence, we obtain the desired result.
\end{proof}

It is shown in \cite{WW10} 
that 2-SiSP is  at least as hard as APSP for sub-cubic computations, using 
a reduction from minimum weight triangle.  That reduction is an $(m+n)$ reduction. However, a minimum weight triangle in a sparse graph can be
found in $O(m^{3/2})$ time using the triangle finding algorithm in~\cite{IR78}. Here we give an $(m+n)$ reduction from Min-Wt-Cyc to 2-SiSP
to establish that  a `sub-$mn$' algorithm for 2-SiSP would imply a similar improvement for Min-Wt-Cyc,
a long-standing open question.

\begin{lemma}	\label{lemma:MWC-2SiSP}
Min-Wt-Cyc $\leq_{(m+n)}$ 2-SiSP
\end{lemma}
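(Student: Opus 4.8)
The plan is to exhibit a linear-time, linear-size transformation of a Min-Wt-Cyc instance $G=(V,E)$ into a 2-SiSP instance $(G',s,t)$ so that the two simple shortest $s$--$t$ paths reveal a minimum weight cycle of $G$, with the cycle recoverable in $O(m+n)$ time; by Definition~\ref{def:mn-red} this establishes Min-Wt-Cyc $\leq_{(m+n)}$ 2-SiSP. The workhorse is the vertex-splitting gadget already used in Lemma~\ref{lemma:kSiSC-SiSP} and in Section~\ref{sec:k-cycles}: split each $v$ into $v_{in},v_{out}$ joined by a $0$-weight edge $v_{in}\to v_{out}$, and map each $(u,w)\in E$ to $u_{out}\to w_{in}$ with the same weight. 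Under this map a simple cycle through $v$ in $G$ becomes a simple path from $v_{out}$ to $v_{in}$ in the gadget of equal weight, and conversely. Thus a minimum weight cycle of $G$ corresponds to a minimum-weight \emph{closing} path $v_{out}\rightsquigarrow v_{in}$, minimized over all $v$.

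First I would add a source $s$ and sink $t$ together with a \emph{trivial} first path: arrange $G'$ so that a fixed $0$-weight simple $s$--$t$ path $P_1$ always exists and is the unique shortest path. The aim of the wiring of $s$ and $t$ into the gadget is that \emph{every} simple $s$--$t$ path other than $P_1$ is the image of a simple cycle of $G$, of equal weight, and conversely every simple cycle of $G$ yields such a path. Granting this correspondence, the second simple shortest path $P_2$ returned by 2-SiSP has weight exactly the minimum cycle weight, and the cycle itself is recovered by reading off the gadget edges used by $P_2$ and un-splitting. Since $G'$ has $O(n)$ vertices and $O(m+n)$ edges and is built in $O(m+n)$ time, and recovery is $O(m+n)$, this is an $(m+n)$ reduction, and Lemma~\ref{lemma:MWC-2SiSP} follows; in particular a ``sub-$mn$'' algorithm for 2-SiSP would yield one for Min-Wt-Cyc.

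The hard part is the correspondence claimed above, and specifically forcing a non-trivial $s$--$t$ path to \emph{close}, i.e.\ to enter the gadget at some $v_{out}$ and leave from the \emph{same} vertex's $v_{in}$, simultaneously over all $v$ and within linear size. Two tempting tactics provably fail and must be circumvented: (i) taking one disjoint copy of the gadget per candidate start vertex does force closure but blows the size up to $\Theta(mn)$, violating the $(m+n)$ budget; and (ii) merely attaching $s\to v_{out}$ and $v_{in}\to t$ edges with weight offsets cannot work, since with such a wiring the cheapest alternative to $P_1$ is an \emph{open} path (e.g.\ a single gadget edge $v_{out}\to u_{in}$ with $u\neq v$), and no choice of additive offsets $\alpha_v,\beta_v$ on the connectors can make ``enter $=$ exit'' the minimizer for all $v$ at once. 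I would therefore route $s$ and $t$ through a small \emph{return-forcing} directed gadget whose only $s$--$t$ routes, apart from $P_1$, are those that re-enter the start vertex before reaching $t$. Verifying that this gadget (a) is simple and of linear size, (b) admits no spurious open path cheaper than the cheapest closing path, and (c) induces the claimed weight-preserving bijection between distinct simple $s$--$t$ paths and simple cycles of $G$, is the crux of the argument and the step I expect to demand the most care.
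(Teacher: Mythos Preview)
Your setup is right—vertex splitting, a free backbone path $P_1$, and the need to force any non-trivial $s$--$t$ path to ``close'' at the same vertex it entered—but the proposal stops exactly where the proof begins. You explicitly defer the construction of the ``return-forcing gadget'' and its verification; that gadget \emph{is} the content of Lemma~\ref{lemma:MWC-2SiSP}, so what you have is a problem statement rather than a proof.

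Your impossibility observation in (ii) is correct for a \emph{star} wiring $s\to v_{out}$, $v_{in}\to t$: with offsets $\alpha_v,\beta_v$ and $\alpha_v+\beta_v$ constant, the cross terms $\alpha_u+\beta_v$ cannot all be pushed above the diagonal. But this does not rule out offsets in a different topology, and indeed the paper's gadget is exactly an offset scheme that works once the star is replaced by a \emph{path backbone}. Take $P_1=p_0\to p_1\to\cdots\to p_n$ with weight-$0$ edges, and attach $p_{j-1}\to j_o$ with weight $(n{-}j{+}1)W$ and $j_i\to p_j$ with weight $jW$, where $W$ exceeds the weight of any simple path in $G'$. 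Now any detour $p_0\rightsquigarrow p_{s-1}\to s_o\rightsquigarrow t_i\to p_t\rightsquigarrow p_n$ must satisfy $t\geq s$ \emph{because the $s$--$t$ path must be simple and $p_t$ cannot already lie on the prefix $p_0\cdots p_{s-1}$}. That is the asymmetry your star argument lacked. The offsets then finish the job: entering at $s$ and exiting at $t$ costs $(n{+}1{+}t{-}s)W$ on the connectors, so $t>s$ costs at least $(n{+}2)W$, while any closing detour ($t=s$) costs $(n{+}1)W$ plus a cycle weight $<W$. Hence the second simple shortest $p_0$--$p_n$ path has $t=s$ and its internal segment $s_o\rightsquigarrow s_i$ is a minimum-weight cycle of $G$.

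So the missing idea is not an exotic gadget but the combination of (a) a linear backbone that lets \emph{simplicity} impose the one-sided constraint $t\geq s$, and (b) large additive offsets that penalize $t>s$. Once you see that, properties (a)--(c) you listed are immediate, and the size is $O(m+n)$.
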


\begin{proof}
Suppose we are given an instance of the Min-Wt-Cyc problem, a directed graph $G = (V,E)$ with vertex set $V  = \{1, 2, \ldots, n\}$, 
and we need
  to find the minimum weight cycle in the graph. 
 We will 
reduce this instance of the problem to that of computing 2-SiSP in a weighted directed graph,
as follows.

We
first construct 
the
 directed graph $G' = (V', E')$, as described in the proof of Lemma \ref{lemma:kSiSC-SiSP}.

Then 
we
create a directed graph $G'' = (V'',E'')$ such that it contains $G'$ as 
a
subgraph and 
also contains
a path $P$ ($p_0 \rightarrow p_1 \rightsquigarrow p_{n-1} \rightarrow p_n$) of $n+1$ vertices such that all edges on $P$ have weight 0.

Let 
$W=n \cdot w$,
where $w$ is the maximum weight of any edge in $G$.  
For
each $1 \leq j \leq n$, 
we
add 
an
edge of weight $(n-j+1)W$ from $p_{j-1}$ to $j_o$ and an edge of weight $jW$ from $j_i$ to $p_j$.  

Figure \ref{fig:hardF} depicts the full construction of $G''$ for $n=3$.

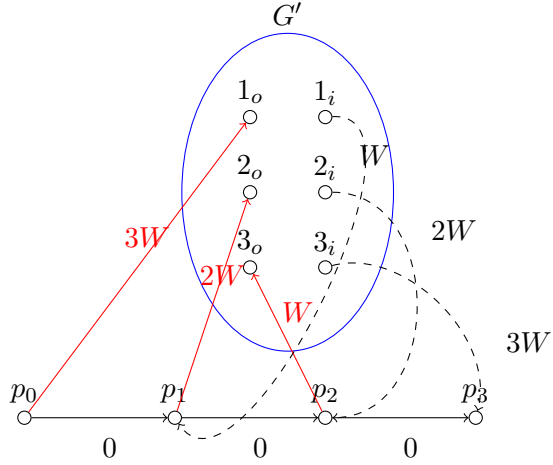
\begin{figure}
    \centering
        \makebox[.3\textwidth]{
            \begin{tikzpicture}[every node/.style={circle, draw, inner sep=0pt, minimum width=5pt}]
            \node (1o)[label=above:$1_o$] at (0,0)  {};
            \node (1i)[label=above:$1_i$] at (1,0) {};
            \node (2o)[label=above:$2_o$]	at (0,-1)		{};
            \node (2i)[label=above:$2_i$] at (1,-1)	{};
            \node (3o)[label=above:$3_o$] at (0,-2)	{};
            \node (3i)[label=above:$3_i$] at (1,-2) {};
            \node (p0)[label=above:$p_0$] at (-3,-4) {};
            \node (p1)[label=above:$p_1$] at (-1,-4) {};
            \node (p2)[label=above:$p_2$] at (1,-4) {};
            \node (p3)[label=above:$p_3$] at (3,-4) {};
            \node[ellipse, color=blue, fit=(1o)(1i)(2o)(2i)(3o)(3i), inner sep=4mm](G') [label=above:$G'$] {};
            \begin{scope}[->,every node/.style={ right}]
            \draw[->] (p0) -- (p1) node[midway, label=below:$0$] {};
            \draw[->] (p1) -- (p2) node[midway, label=below:$0$] {};
            \draw[->] (p2) -- (p3) node[midway, label=below:$0$] {};
            \draw[->,color=red] (p0) -- (1o) node[midway, label=above:$3W$] {};
            \draw[->,color=red] (p1) -- (2o) node[midway, label=above:$2W$] {};
            \draw[->,color=red] (p2) -- (3o) node[midway, label=above:$W$] {};
            \draw[->,color=black] (1i) to  [bend left=120, distance=15mm]  (p1) [dashed] node[label=left:$W$] at (2,-0.5) {};
            \draw[->,color=black] (2i) to  [bend left=90, distance=15mm]  (p2) [dashed] node[label=right:$2W$] at (2,-1.5) {};
            \draw[->,color=black] (3i) to  [bend left=60, distance=10mm]  (p3) [dashed] node[label=right:$3W$] at (3,-3) {};
            \end{scope}
            \end{tikzpicture}}
    \caption{Construction of $G''$ for $n = 3$ for Lemma~\ref{lemma:MWC-2SiSP}.}
    \label{fig:hardF}
\end{figure}

Now the 2-SiSP from $p_0$ to $p_{n}$ is of the form: $p_0 \rightsquigarrow p_{s-1} \rightarrow s_o \rightsquigarrow t_i \rightarrow p_t \rightsquigarrow p_n$
since it must contain a single detour. Further, $t > s-1$ since
 the path is simple. 
We claim that $t = s$. If not, then $t > s$ and
the weight of the path is at least $(n+2)W$. 
However,
any path of the form $p_0 \rightsquigarrow p_{s-1} \rightarrow s_o \rightsquigarrow s_i \rightarrow p_s \rightsquigarrow p_n$
has weight
strictly less than $(n+2)W$,
since any simple path in $G'$ has weight less than $W$, 
Hence, $t = s$ as 
long as
there is at least one 
path of the form $x_o \rightsquigarrow x_i$ (where $x \in V$) in $G'$.

Thus the 2-SiSP in $G''$ corresponds to the shortest path in $G'$ of the form $x_o \rightsquigarrow x_i$, which in turn corresponds to the minimum weight cycle in the original graph $G$. 

As the number of vertices and edges in $G''$ is
linear in the number of vertices and edges, respectively, in $G$, we 
obtain
the desired result.
\end{proof}

\begin{lemma} \label{lemma:kAVSiSC}
Min-Wt-Cyc $\leq_{(m+n)}$ $k$-AVSiSC.
\end{lemma}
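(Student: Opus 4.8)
The plan is to reduce Min-Wt-Cyc to $k$-AVSiSC using essentially the same gadget construction that proved the earlier reductions, exploiting the fact that $k$-AVSiSC must report, for \emph{every} vertex $x$, the $k$ simple shortest cycles through $x$. First I would take an arbitrary instance of Min-Wt-Cyc, a directed graph $G=(V,E)$ with nonnegative edge weights, whose minimum weight cycle we must find. The natural target is a single designated vertex in a modified graph whose simple shortest cycle encodes the minimum weight cycle of $G$. Since $k$-AVSiSC computes cycles through \emph{all} vertices, it certainly computes them through any particular vertex we choose, so a reduction to $k$-SiSC-through-a-fixed-vertex composes with the trivial inclusion $k\text{-SiSC} \leq_{(m+n)} k\text{-AVSiSC}$.

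Concretely, I would build $G''$ exactly as in the proof of Lemma~\ref{lemma:MWC-2SiSP}: first split each vertex $x$ into $x_i,x_o$ to form $G'$ (so that an $x_o \rightsquigarrow x_i$ path corresponds to a cycle through $x$ in $G$), then attach the zero-weight path $P = (p_0 \rightarrow \cdots \rightarrow p_n)$ together with the weighted entry/exit edges of weights $(n-j+1)W$ and $jW$, where $W = n\cdot w$ and $w$ is the maximum edge weight in $G$. The same weight-analysis as in Lemma~\ref{lemma:MWC-2SiSP} forces any second simple path from $p_0$ to $p_n$ to take a single detour $p_{s-1}\rightarrow s_o \rightsquigarrow s_i \rightarrow p_s$ with matched indices, so its detour weight equals the minimum weight cycle of $G$. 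To turn this $p_0$-to-$p_n$ path statement into a cycle-through-a-vertex statement, I would add a single zero-weight back edge $(p_n,p_0)$, creating a new vertex or simply closing the path into a cycle through $p_0$; then the $2$-SiSC through $p_0$ in this augmented graph corresponds to the $2$-SiSP from $p_0$ to $p_n$ in $G''$, which encodes the minimum weight cycle of $G$. Since $k$-AVSiSC reports the cycles through $p_0$ among its output, reading off the second simple shortest cycle through $p_0$ recovers the answer.

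The key steps in order are: (1) construct $G''$ from $G$ via the Lemma~\ref{lemma:MWC-2SiSP} gadget; (2) close $P$ into a cycle through $p_0$ with a zero-weight edge so the relevant object is a simple cycle rather than a path; (3) invoke the weight argument verbatim to show the second simple shortest cycle through $p_0$ has weight $W$ plus the minimum cycle weight of $G$; (4) observe that $k$-AVSiSC, by definition, produces the $k$ simple shortest cycles through $p_0$, so we extract the second one in $O(1)$ time; and (5) verify the size blowup is linear, i.e.\ $|V''|=O(n)$ and $|E''|=O(m+n)$, so the reduction is an $(m+n)$ reduction. The correctness of step (3) is inherited directly from Lemma~\ref{lemma:MWC-2SiSP}.

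The main obstacle I anticipate is ensuring the cycle through $p_0$ is \emph{exactly} the intended $p_0 \rightsquigarrow p_n \rightarrow p_0$ structure and that no shorter spurious cycle through $p_0$ exists that short-circuits the analysis; in particular I must check that the zero-weight back edge does not enable a cheap simple cycle through $p_0$ that bypasses the $G'$ gadget, and that $p_0$ indeed lies on the cycles we care about. Because all the $P$-edges and the back edge have weight $0$ while every genuine detour through $G'$ costs at least $W$, the shortest cycle through $p_0$ is the trivial one $p_0 \rightarrow p_1 \rightarrow \cdots \rightarrow p_n \rightarrow p_0$ of weight $0$ (the degenerate "first" cycle), and the \emph{second} simple shortest cycle must take a single detour into $G'$, which is precisely where the Lemma~\ref{lemma:MWC-2SiSP} weight bound applies. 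Confirming this clean separation between the first and second cycles—and that the detour-index-matching argument survives the closure into a cycle—is the crux of the proof; everything else is bookkeeping on the linear size of the construction.
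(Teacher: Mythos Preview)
Your proposal is correct (for $k \geq 2$) but takes a far more elaborate route than the paper. The paper's proof is essentially the identity reduction: given the Min-Wt-Cyc instance $G$, run $k$-AVSiSC directly on $G$ itself, with no gadget construction whatsoever. Since every cycle in $G$ passes through some vertex, the minimum weight cycle equals $\min_{x\in V}\{\text{weight of the shortest simple cycle through }x\}$, and this quantity is immediately read off from the first entry of each vertex's list in the $k$-AVSiSC output. That is the entire argument.

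By contrast, you route through the full Lemma~\ref{lemma:MWC-2SiSP} gadget, close the auxiliary path $P$ into a cycle with a back edge, and then extract the \emph{second} simple shortest cycle through $p_0$. This does work, and the weight-separation argument you invoke carries over cleanly. But it costs you a page of verification (checking that the back edge creates no spurious short cycles through $p_0$, that detour indices match, etc.) for what the paper does in one sentence. Your approach also requires $k \geq 2$ so that the second cycle through $p_0$ is actually reported, whereas the paper's reduction works already for $k = 1$. The broader point you seem to have missed is that $k$-AVSiSC, by definition, already hands you the shortest cycle through every vertex of the \emph{original} graph --- which is precisely Min-Wt-Cyc up to a single minimum.
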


\begin{proof}
Suppose  we are given an instance of the Min-Wt-Cyc problem, a directed graph $G=(V,E)$, and we need to find the minimum weight 
cycle in the graph.

We can find the $k$ minimum weight cycles passing through each vertex $x \in V$ by computing $k$-AVSiSC on $G$. We can then find the minimum weight cycle by taking the minimum of the shortest simple cycles passing through every vertex $x \in V$. Thus, we obtain the desired result.
\end{proof}

We establish two more hardness results for  the following two  problems:

\begin{itemize}
\item[(a)] {\it  $k$-th-All-SiSC} is the problem of computing the $k$-th simple shortest cycle in $G$ after the $k-1$ simple
shortest cycles in $G$ have been computed (for any constant $k>1$).

\item[(b)]{\it Second-APSiSP} is the problem of generating the second simple shortest path for all pairs of
vertices after APSP has been computed.

\end{itemize}

\begin{lemma}	\label{lemma:MWC-kASiSC}
Min-Wt-Cyc $\leq_{(m+n)}$ $k$-th-All-SiSC.
\end{lemma}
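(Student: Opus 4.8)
The plan is to reduce an instance $G=(V,E)$ of Min-Wt-Cyc to an instance of $k$-th-All-SiSC by padding $G$ with $k-1$ artificial cheap cycles whose identities we know in advance, so that once these $k-1$ cycles are supplied to the $k$-th-All-SiSC oracle as the already-computed shortest cycles, the $k$-th simple shortest cycle returned is precisely a minimum weight cycle of $G$.

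First I would construct $G'$ as follows. Keep all of $G$ (so that every simple cycle of $G$ survives, using the same edges), and add $k-1$ vertex-disjoint gadgets, where the $i$-th gadget consists of two new vertices $a_i,b_i$ and the two edges $(a_i,b_i),(b_i,a_i)$, each of weight $0$. These gadgets share no vertex with $V$ and none with one another, so every simple cycle of $G'$ lies entirely inside $G$ or is exactly one of the $k-1$ gadget $2$-cycles; in particular the construction creates no spurious cycles. The construction uses $O(k)=O(1)$ new vertices and edges (since $k$ is constant) and takes $O(m+n)$ time.

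Next I would guarantee that the $k-1$ gadget cycles are strictly the lightest simple cycles of $G'$. Each gadget cycle has weight $0$, so it suffices to ensure that every simple cycle of $G$ has strictly positive weight. Since Min-Wt-Cyc is already hard for positive integer weights, I may assume this directly; more generally, to handle possible weight-$0$ cycles one can first replace each edge weight $w_e$ by $w_e\cdot(n+1)+1$, which makes every simple cycle weight at least $1$ while leaving the identity of a minimum weight cycle unchanged. Indeed, a simple cycle with $\ell\le n$ edges and original weight $w$ receives new weight $w(n+1)+\ell$, and for distinct integer weights $w_1<w_2$ we get $w_1(n+1)+\ell_1 \le w_1(n+1)+n < w_1(n+1)+(n+1)\le w_2(n+1)+\ell_2$, so the $(n+1)$ gap dominates the length term. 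With this in place the $k-1$ weight-$0$ gadget cycles occupy exactly positions $1,\dots,k-1$ in the nondecreasing order of simple-cycle weights of $G'$, and since they are the only weight-$0$ cycles and there are exactly $k-1$ of them, the set of the $k-1$ shortest simple cycles of $G'$ is unambiguously this set of gadgets.

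Finally I would invoke the $k$-th-All-SiSC oracle on $G'$, supplying the $k-1$ gadget cycles as the already-computed shortest simple cycles. The $k$-th simple shortest cycle returned must be a minimum weight cycle among the cycles of $G$ (the only cycles of $G'$ besides the gadgets), which uses exactly the edges of a minimum weight cycle of the original $G$; reading off these edges recovers the answer to Min-Wt-Cyc in $O(n)$ time. If $G'$ has no $k$-th simple cycle then $G$ is acyclic, and we report that $G$ has no cycle. The only step needing care is the tie-breaking argument ensuring the gadgets really are the first $k-1$ cycles and that no cycle of $G$ slips ahead of them; the scaling step is precisely what makes this separation clean, and it is the part I expect to require the most attention in a fully rigorous write-up.
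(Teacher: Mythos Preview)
Your proposal is correct and follows essentially the same approach as the paper: pad $G$ with $k-1$ weight-$0$ gadget cycles so that the $k$-th simple shortest cycle in the augmented graph is a minimum weight cycle of $G$. The only differences are cosmetic---the paper attaches its $k-1$ gadget triangles to a single fixed vertex $x\in V$ (via new vertices $d_i,e_i$ and edges $x\to d_i\to e_i\to x$) rather than making them fully disjoint from $G$, and it does not include your extra scaling step to separate the gadgets from possible weight-$0$ cycles in $G$.
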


\begin{proof}
Suppose we are given an instance of the Min-Wt-Cyc, a directed graph $G = (V,E)$. Now we'll reduce this instance of the problem to that of computing $k$-th-All-SiSC in a weighted directed graph.

Now create a directed graph $G' = (V',E')$ such that it contains $G$ as its subgraph and $2(k-1)$ additional vertices coming from the vertex partitions $D = \{d_i\}_{i=1}^{k-1}$ and $E = \{e_i\}_{i=1}^{k-1}$. Fix some $x \in V$. For each $1\leq i\leq k-1$, add edges of weight 0 from $x$ to $d_i$, from $d_i$ to $e_i$ and from $e_i$ to $x$. 

Figure \ref{fig:hardB} depicts the full construction of $G'$ for $k = 3$.

\begin{figure}
    \centering
        \makebox[.3\textwidth]{
            \begin{tikzpicture}[every node/.style={circle, draw, inner sep=0pt, minimum width=5pt}]
            \node (x)[label=above:$x$] at (0,0)  {};
            \node[ellipse, fit=(x), inner sep=6mm](G) [label=above:$G$] {};
            \node (e1)[label=above:$e_1$] at (1,-2) {};
            \node (e2)[label=above:$e_2$] at (1,-3) {};
            \node (d1)[label=above:$d_1$] at (-1,-2) {};
            \node (d2)[label=above:$d_2$] at (-1,-3) {};
            \draw[->] (x) -- (d1);
            \draw[->] (d1) -- (e1);
            \draw[->] (e1) -- (x);
            \draw[->] (x) -- (d2);
            \draw[->] (d2) -- (e2);
            \draw[->] (e2) -- (x);
            \end{tikzpicture}}
    \caption{Construction of $G'$ for $k = 3$ for Lemma~\ref{lemma:MWC-kASiSC}.}
    \label{fig:hardB}
\end{figure}
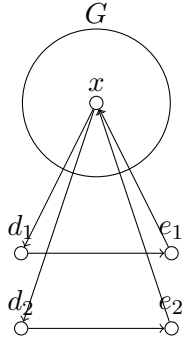

Now the first $(k-1)$ min-weight cycles in $G'$ correspond to the cycles involving vertices $x$, $d_i$ and $e_i$ (for each $1\leq i\leq k-1$). And the $k$-th min-weight cycle in $G'$ corresponds to the minimum weight cycle in $G$.

As the number of vertices and edges in $G'$ are linear in the number of vertices and edges, respectively, in $G$,
 we get the desired result.
\end{proof}

\begin{lemma}	\label{thm:UW-2APSiSP}
APSP $\leq_{(m+n)}$ Second-APSiSP.
\end{lemma}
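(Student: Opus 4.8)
We must show APSP $\leq_{(m+n)}$ Second-APSiSP, i.e. that from the second simple shortest path for all pairs (computed after APSP is known) we can recover all-pairs shortest-path distances, using only linear-time pre- and post-processing.

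\textbf{Goal.} We must exhibit an $(m+n)$ reduction from APSP to Second-APSiSP: from an APSP instance $G=(V,E)$ we build, in $O(m+n)$ time, a graph $G'$ whose second simple shortest paths for all pairs reveal all of $G$'s shortest-path distances. The plan is to force, for a suitable family of query pairs in $G'$, the \emph{first} simple shortest path to be a trivial zero-weight shortcut, so that the \emph{second} simple shortest path is pushed onto the genuine shortest path through the embedded copy of $G$. This is the same flavor of source/sink gadget used in Lemma~\ref{lemma:kSiSC-SiSP}, augmented by a single hub to keep the edge count linear.

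\textbf{Construction.} Keep $G$ intact inside $G'$ (with its original weights). For each $u\in V$ add a \emph{pure source} $\sigma_u$ with a single outgoing edge $\sigma_u\to u$ of weight $0$ and \emph{no incoming edges}; for each $v\in V$ add a \emph{pure sink} $\tau_v$ with a single incoming edge $v\to\tau_v$ of weight $0$ and \emph{no outgoing edges}. Finally add one hub vertex $h$ with edges $\sigma_u\to h$ and $h\to\tau_v$, all of weight $0$, for every $u,v$. Then $|V'|=O(n)$ and $|E'|=O(m+n)$, and $G'$ is built in $O(m+n)$ time. I would then read off, for every ordered pair $(u,v)$, the weight of the second simple shortest path from $\sigma_u$ to $\tau_v$, and output it as $d_G(u,v)$ (reporting $\infty$ when no second path exists).

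\textbf{Correctness.} The heart of the argument is to characterize the simple $\sigma_u\rightsquigarrow\tau_v$ paths. Because every $\sigma$-vertex has no incoming edge, the only source-copy reachable inside such a path is $\sigma_u$ itself; because every $\tau$-vertex has no outgoing edge, the only sink-copy that can appear is the endpoint $\tau_v$; and $h$'s only in-edges come from $\sigma$-vertices while its only out-edges go to $\tau$-vertices. Consequently any simple path that visits $h$ must be exactly the shortcut $\sigma_u\to h\to\tau_v$ of weight $0$, and any simple path avoiding $h$ must have the form $\sigma_u\to u\rightsquigarrow_G v\to\tau_v$, where the middle portion is a simple $u$–$v$ path of $G$. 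Thus the set of simple $\sigma_u\rightsquigarrow\tau_v$ path weights is precisely $\{0\}\cup\{\,wt(p): p \text{ a simple } u\!-\!v \text{ path in } G\,\}$. Hence the first simple shortest path is the shortcut (weight $0$), and the second simple shortest path attains $\min_p wt(p)=d_G(u,v)$, since shortest paths are simple under nonnegative weights. This gives $d_G(u,v)=wt\big(\text{2nd SiSP}(\sigma_u,\tau_v)\big)$ for all reachable pairs, and no second path when $v$ is unreachable from $u$, establishing the reduction (the read-off costs $O(1)$ per pair, i.e.\ overhead $O(m+n)$ beyond the $\Theta(n^2)$ output shared by both problems).

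\textbf{Main obstacle and a caveat.} The delicate step is the isolation argument ruling out ``mixed'' paths that interleave the hub with $G$-edges; this is exactly what lets me use a \emph{single} hub (with $O(n)$ shortcut edges) instead of $\Theta(n^2)$ direct $\sigma_u\to\tau_v$ edges, which would violate the $O(m+n)$ bound. A routine caveat to handle is the tie when $d_G(u,v)=0$: the shortcut and some $G$-route then both have weight $0$, but they are \emph{distinct} simple paths, so the second simple shortest path is still the $G$-route of weight $0=d_G(u,v)$. Finally, it is worth emphasizing in the write-up that the shortest (first) distance for every query pair $(\sigma_u,\tau_v)$ is the trivial value $0$, so the construction shows that recovering the \emph{second} simple shortest paths is as hard as APSP even when the first paths are available for free.
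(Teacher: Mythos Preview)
Your construction is essentially the paper's own: source copies $\sigma_u$ (their $a_i$), sink copies $\tau_v$ (their $b_i$), and a single weight-$0$ hub $h$ (their $s$) forcing the first path to be the trivial shortcut and the second to route through $G$. The only cosmetic difference is that the paper gives the $a_i\to i$ and $i\to b_i$ edges weight~$1$ (avoiding the tie you handle in your caveat), and your write-up supplies a more explicit isolation argument for why no mixed paths exist; otherwise the reductions coincide.
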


\begin{proof}
Suppose we are given an arbitrary directed graph $G = (V_G, E_G)$ where $V_G = \{1, 2, \ldots, n \}$. Now we'll reduce the problem of computing APSP on $G$ to one of computing Second-APSiSP in another weighted directed graph.

Now construct a graph $G' = (V', E')$ on $3n + 1$ nodes that contains $G$ as its subgraph. Apart from the $n$ vertices present in 
$G$, $G'$ also contains a vertex $s$ and $2n$ additional vertices coming from the vertex partitions $A = \{a_i\}_{i=1}^{n}$ and $B = \{b_i\}_{i=1}^{n}$.

For each $1\leq i\leq n$, add an edge of weight 0 from $a_i$ to $s$ and from $s$ to $b_i$.

For every $1 \leq i\leq n$, also add edges of weight 1 from $a_i$ to $i$ and from $i$ to $b_i$.

Figure \ref{fig:hardD} depicts the full construction of $G'$ for $n = 3$.

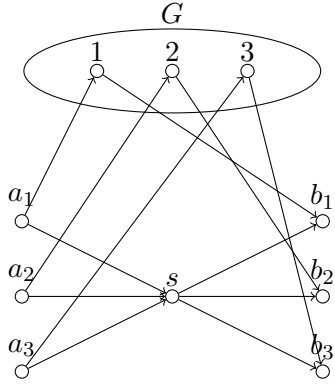
\begin{figure}
    \centering
        \makebox[.3\textwidth]{
            \begin{tikzpicture}[every node/.style={circle, draw, inner sep=0pt, minimum width=5pt}]
            \node (a1)[label=above:$a_1$] at (0,0)  {};
            \node (a2)[label=above:$a_2$] at (0,-1) {};
            \node (a3)[label=above:$a_3$]	at (0,-2) {};
            \node (s)[label=above:$s$] at (2,-1) {};
            \node (b1)[label=above:$b_1$] at (4,0)	{};
            \node (b2)[label=above:$b_2$] at (4,-1)	{};
            \node (b3)[label=above:$b_3$] at (4,-2) {};
            \node (1)[label=above:$1$] at (1,2) {};
            \node (2)[label=above:$2$] at (2,2)	{};
            \node (3)[label=above:$3$] at (3,2)	{};
            \node[ellipse, fit=(1)(2)(3), inner sep=3mm](G) [label=above:$G$] {};
            \draw[->] (a1) -- (1);
            \draw[->] (a1) -- (s);
            \draw[->] (a2) -- (2);
            \draw[->] (a2) -- (s);
            \draw[->] (a3) -- (3);
            \draw[->] (a3) -- (s);
			 \draw[->] (1) -- (b1);
			 \draw[->] (2) -- (b2);
			 \draw[->] (3) -- (b3);
			 \draw[->] (s) -- (b1);
			 \draw[->] (s) -- (b2);
			 \draw[->] (s) -- (b3);
            \end{tikzpicture}}
    \caption{Construction of $G'$ for $n = 3$ for Lemma~\ref{thm:UW-2APSiSP}.}
    \label{fig:hardD}
\end{figure}

Now the 2-SiSP from some $a_i$ to some $b_j$ (where $1\leq i,j \leq n$) is of the form :- $a_i \rightarrow i \rightsquigarrow j \rightarrow b_j$, where the second simple shortest path first takes the edge $(a_i,i)$ and then it takes the shortest path from $i$ to $j$ and then the edge $(j,b_j)$.

Thus every 2-SiSP in $G'$ from a vertex $a_i$ in $A$ to a vertex $b_j$ in $B$ corresponds to a shortest path from $i$ to $j$ in the original graph $G$.

As the number of vertices and edges in $G'$ are linear in the number of vertices and edges, respectively, in $G$, we get the desired result.
\end{proof}

The above reductions show that for any $k\geq 2$, $k$-SiSP, $k$-SiSC, $k$-AVSiSC and $k$-th-All-SiSC cannot be solved in $o(m \cdot n)$ time 
unless an improved algorithm is  obtained for Min-Wt-Cyc.
Also, the last reduction shows that computing Second-APSiSP is at least as hard as computing APSP. (It 
can also be seen
that  Min-Wt-Cyc $\leq_{(m+n)}$ APSP.)

\vone
\noindent
{\bf Unweighted Graphs.} Most of our reductions go through (either unchanged or with small changes) for unweighted graphs. The one exception is Min-Wt-Cyc to 2-SiSP. Here in fact, there is a
randomized 
$\tilde{O}(k \cdot m \sqrt n)$
 time algorithm for $k$-SiSP~\cite{RZ12}.  For the reductions from cycle problems to
path problems 
($k$-SiSC to $k$-SiSP and $2$-AVSiSC to $2$-APSiSP),
 we used an edge of weight 0 from $v_i$ to $v_o$. In the unweighted case, we can leave the edge weight at 1, and observe that this preserves the ordering of simple shortest paths for any
pair of vertices, since a path of length $r$ in $G$ from $s$ to $t$ is now transformed into a path
of length $2r-1$ from $s_o$ to $t_i$ in $G'$.

\subsection{Undirected Graphs}

Our algorithms for $k$-APSiSP and $k$-All-SiSP work for undirected graphs. 
Hence $k$-All-SiSP and 2-APSiSP have the same time bound as for the directed case.
However, $k$-SiSP in undirected graphs can be solved in 
$\tilde{O}(m)$
time \cite{KIM82}, hence for $k \geq 3$, $k$-APSiSP can be computed
in $\tilde{O}(mn^2)$ time in undirected graphs.

Our reduction  from $k$-SiSC to $k$-SiSP problem given in Section~\ref{sec:k-cycles} does not work for undirected graphs.
We give an alternate reduction (with a small $O(\log n)$ increase in the bound). 
\begin{lemma}	\label{lemma:unkSiSCkSiSP}
$k$-SiSC $\leq_{((m+n) \cdot \log n)}$ $k$-SiSP.
\end{lemma}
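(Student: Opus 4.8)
The plan is to repair the directed vertex-splitting (which fails in undirected graphs because an edge cannot be oriented into an ``in'' and an ``out'' copy) by separating the neighbors of the query vertex according to the bits of their labels, at the cost of $\lceil \log n\rceil$ calls to $k$-SiSP. Fix the vertex $x$ through which we want the $k$ shortest simple cycles, and assume the vertices carry distinct labels in $\{1,\dots,n\}$, so that any two distinct neighbors of $x$ differ in at least one bit. For each bit position $b$, $0\le b<\lceil \log n\rceil$, I would build an undirected graph $G_b$ equal to a copy of $G$ in which $x$ is deleted and replaced by two new vertices $x_0^b$ and $x_1^b$: for each edge $\{x,w\}$ of $G$ I attach $w$ to $x_c^b$, where $c$ is the $b$-th bit of the label of $w$, keeping the original weight, and I retain every edge of $G$ not incident to $x$ unchanged. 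I then run $k$-SiSP from $x_0^b$ to $x_1^b$ in $G_b$, obtaining a sorted list $L_b$ of up to $k$ simple paths.

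The correspondence is as follows. Since there is no edge between $x_0^b$ and $x_1^b$, any simple path $\rho$ from $x_0^b$ to $x_1^b$ in $G_b$ has the form $x_0^b - u \rightsquigarrow v - x_1^b$ with $u\neq v$, where $u$ has $b$-th bit $0$ and $v$ has $b$-th bit $1$; collapsing $x_0^b$ and $x_1^b$ back into $x$ turns $\rho$ into a simple cycle $C(\rho)$ through $x$ of the same weight, whose two edges at $x$ go to $u$ and $v$. Conversely, a simple cycle $C$ through $x$ using the edges $\{x,u\}$ and $\{x,v\}$ is realized as such a path in $G_b$ exactly for those bits $b$ at which $u$ and $v$ differ. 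The map $\rho\mapsto C(\rho)$ is weight-preserving and injective on each $L_b$, so each $L_b$ is an ordered list of distinct simple cycles through $x$.

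To recover the answer I would merge the $\lceil \log n\rceil$ lists, discard duplicate cycles, and output the $k$ lightest distinct cycles. Correctness rests on the claim that every one of the $k$ shortest simple cycles through $x$ appears in some $L_b$. Let $C$ be the $i$-th shortest such cycle, $i\le k$, using neighbors $u\neq v$, and choose a bit $b$ at which $u$ and $v$ differ. Any simple $x_0^b$--$x_1^b$ path strictly lighter than the path realizing $C$ maps, via $\rho\mapsto C(\rho)$, to a \emph{distinct} simple cycle through $x$ that is strictly lighter than $C$; there are at most $i-1<k$ such cycles, so the path realizing $C$ has fewer than $k$ lighter paths and therefore lies in the top-$k$ list $L_b$. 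Hence the union of the lists contains all $k$ shortest cycles, and the deduplicated merge returns exactly $k$-SiSC for $x$.

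For the cost, each $G_b$ has $O(n)$ vertices and $O(m)$ edges and is built in $O(m+n)$ time, so the reduction performs $\lceil \log n\rceil$ $k$-SiSP computations on instances of size $O(m+n)$, together with $O((m+n)\log n)$ construction work and a merge over $O(k\log n)$ total entries, which yields the claimed $\leq_{((m+n)\cdot\log n)}$ bound. The step I would write most carefully is the top-$k$ coverage argument above: it relies on distinct $x_0^b$--$x_1^b$ paths mapping to distinct cycles (so that counting lighter paths reduces to counting lighter cycles), and on the chosen bit genuinely separating the two neighbors of $C$ --- which is precisely why we split by vertex \emph{indices} rather than by an arbitrary two-coloring. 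The deduplication in the merge is essential, since a single cycle can surface in several lists $L_b$, once for each bit at which its two incident neighbors disagree.
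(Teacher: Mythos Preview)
Your bit-splitting construction is exactly the paper's: replace $x$ by two copies in $\lceil\log n\rceil$ graphs, attaching each neighbor to the copy indicated by the $b$-th bit of its label, and observe that every simple cycle through $x$ is realized as an $x_0^b$--$x_1^b$ path in at least one $G_b$. Your top-$k$ coverage argument (lighter paths in $G_b$ inject into lighter cycles, so the path for the $i$-th cycle sits among the first $i\le k$) is correct and is the part the paper leaves implicit.

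There is one technical mismatch with Definition~\ref{def:mn-red}: that definition is a many-one (single-instance) reduction, whereas you produce $\lceil\log n\rceil$ instances and call $k$-SiSP on each. The paper handles this in its last sentence by taking the disjoint union of the $G_i$, adding two new vertices $z,z'$ joined (with weight $0$) to all the $x_{0,i}$ and all the $x_{1,i}$ respectively, and making a \emph{single} call to $k'$-SiSP from $z$ to $z'$ with $k'=k\cdot\lceil\log n\rceil$; since each cycle is duplicated at most $\lceil\log n\rceil$ times across the $G_i$, deduplicating the $k'$ outputs recovers the $k$ shortest cycles. You should add this packaging step so that your argument conforms to the stated reduction model; otherwise what you have proved is a Turing reduction, not the $\leq_{((m+n)\cdot\log n)}$ of the lemma.
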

\begin{proof}
Let the input be $G = (V,E)$ and the vertex $x \in V$, for which we want to compute $k$-SiSCs. We assume that the vertices are labeled from $1$ to $n$. 
We first show that $k$-SiSC in $G$ can be computed with $\lceil \log n \rceil$ calls to $k$-SiSP.
Let $\mathcal{N}(x)$ be the neighbor-set of $x$. We create $\lceil \log n \rceil$ graphs $G_i = (V_i, E_i)$ such that $\forall 1 \leq i\leq \lceil \log n\rceil$, $G_i$ contains two additional vertices $x_{0,i}$ and $x_{1,i}$ (instead of the vertex $x$) and $\forall y \in \mathcal{N}(x)$, the edge $(y,x_{0,i}) \in E_i$ if $y's$ $i$-th bit is $0$, otherwise the edge $(y,x_{1,i}) \in E_i$. This takes $O((m+n) \cdot \log n)$ time and we observe that every cycle through $x$ will appear as a path from $x_{0,i}$ to $x_{1,i}$ in at least one of the $G_i$. Hence, the $k$-th shortest path in the collection of $k$-SiSPs from $x_{0,i}$ to  $x_{1,i}$ in $G_i$ $\forall  1 \leq i \leq \lceil \log n \rceil$ (after removing duplicates), corresponds to the $k$-th SiSC passing through $x$. 
If we create new vertices $z$ and $z'$, connect $z$ to the $x_{0,i}$ vertices and $z'$ to the
$x_{1,i}$, then computing $k'$-SiSP between $z$ and $z'$ in this graph for $k'=k \cdot \lceil \log n\rceil$, gives us $k$-SiSC 
through $x$ in $G$ as shown above.
\end{proof}

Using the above lemma
and the results in Sections~\ref{sec:k-cycles} and  \ref{sec:allsisp1} we can compute 
$k$-SiSC in $\tilde{O}(km)$ time, 
$k$-AVSiSC in $\tilde{O}(kmn)$ time and
$k$-All-SiSC in $\tilde{O}(m)$ time per cycle after a startup cost of $\tilde{O}(mn)$ in undirected graphs.

Most of our hardness results (from $k$-SiSP to $k$-SiSC, Min-Wt-Cyc to $k$-AVSiSC, Min-Wt-Cyc to $k$-All-SiSC) also hold for undirected graphs. However our reduction from Min-Wt-Cyc to $2$-SiSP for directed graphs does not hold for undirected graphs. This is not surprising as 2-SiSP can be computed in 
$\tilde{O}(m)$ time \cite{KIM82}.

\subsection{Discussion}  

There are several important problems
on sparse graphs for which $\tilde{O}(mn)$ is the current best time bound:
 Min-Wt-Cyc, APSP (for both problems, either directed or undirected, and either weighted or 
 unweighted), weighted $k$-SiSP, 
 and the collection of weighted directed graph problems for which we have given $\tilde{O}(mn)$ time 
algorithms in this paper.
This suggests that the class of problems that
currently have $\tilde{O}(mn)$ time algorithms is an important one, with Min-Wt-Cyc being the key problem,
similar to APSP for cubic computations, and 3SUM for quadratic computations.

\section{Conclusion}

We have presented new algorithms to compute $k$  simple shortest paths and cycles in a weighted directed 
 (or undirected)
graph, complementing many of our upper bounds with hardness results for sparse graphs
(by reductions from Min-Wt-Cyc).
 Our results include the following.

\vmone
\begin{itemize}
\item  A 2-APSiSP algorithm  which almost matches the current best 
$O(mn + n^2\log n\log n)$ bound for finding the two simple shortest paths for just a single pair of vertices.
 
 \item A new recursive algorithm to compute $k$-APSiSP, which improves the best prior bound for $k=3$
 (for directed graphs)
 ;
 although this algorithm, {\sc APSiSP}, does not give improved bounds for $k>3$
 (and for $k>2$ for undirected graphs)
 , it presents a new method for
 finding $k$ shortest paths, and leaves open the possibility for further improvement,  if a better
 algorithm can be found to compute the nearly $k$ SiSP sets $Q_k(x,y)$.

 \item Algorithms and hardness results for the simple cycles versions, $k$-SiSC and $k$-AVSiSC.
 
 \item  Algorithms to efficiently
enumerate simple paths and simple cycles in $G$ in nondecreasing order of weight, and a
conditional  hardness result that enumerating simple cycles in nondecreasing order of weights is a 
significantly harder problem than a similar enumeration of simple paths.
  \end{itemize}
  
  We conclude with some avenues for further research.
  
 1. The main open question for $k$-APSiSP is to come up with faster algorithms to compute the $Q_k (x,y)$ sets for larger values of $k$. This is the key to a
 faster $k$-APSiSP algorithm using our approach, for $k >2$.
  
  2. The space requirements of our all-pairs algorithms are high. Can we come up with
  space-efficient algorithms that match our time bounds?
  
  3. Can 
  we come up with other  hardness results for sparse graphs, for example, 
 can we show that Min-Wt-Cyc $\leq_{(m+n)}$ APSP in {\it undirected graphs}?
  (For directed graphs there is a simple reduction.)

\bibliographystyle{abbrv}
\bibliography{references}

\end{document}